\documentclass[11pt,a4paper]{amsart}

\textheight=9.0truein \hoffset=-0.4truein
\textwidth=6.1truein
\voffset=-0.5truein

\usepackage[utf8]{inputenc}
\usepackage{amssymb}
\usepackage[english]{babel}
\usepackage{graphicx}
\usepackage{amsmath,amssymb}
\usepackage{oldgerm}

\usepackage{mathrsfs}
\usepackage[active]{srcltx}
\usepackage{verbatim}
\usepackage{enumitem} 
\usepackage[numbers]{natbib}
\usepackage[toc,page]{appendix}
\usepackage{aliascnt,bbm}
\usepackage{array}
\usepackage[colorlinks = true, citecolor = blue]{hyperref}
\usepackage[textwidth=4cm,textsize=footnotesize]{todonotes}
\usepackage{xargs}
\usepackage{cellspace}
\usepackage{algorithm}
\usepackage{algorithmic}

\usepackage[Symbolsmallscale]{upgreek}

\usepackage{accents}
\usepackage{dsfont}
\usepackage{aliascnt}
\usepackage{cleveref}
\makeatletter
\newtheorem{theorem}{Theorem}
\crefname{theorem}{theorem}{Theorems}
\Crefname{Theorem}{Theorem}{Theorems}

\newaliascnt{lemma}{theorem}
\newtheorem{lemma}[lemma]{Lemma}
\aliascntresetthe{lemma}
\crefname{lemma}{lemma}{lemmas}
\Crefname{Lemma}{Lemma}{Lemmas}

\newaliascnt{corollary}{theorem}

\aliascntresetthe{corollary}
\crefname{corollary}{corollary}{corollaries}
\Crefname{Corollary}{Corollary}{Corollaries}

\newaliascnt{proposition}{theorem}
\newtheorem{proposition}[proposition]{Proposition}
\aliascntresetthe{proposition}
\crefname{proposition}{proposition}{propositions}
\Crefname{Proposition}{Proposition}{Propositions}

\newaliascnt{definition}{theorem}

\aliascntresetthe{definition}
\crefname{definition}{definition}{definitions}
\Crefname{Definition}{Definition}{Definitions}

\newaliascnt{remark}{theorem}

\aliascntresetthe{remark}
\crefname{remark}{remark}{remarks}
\Crefname{Remark}{Remark}{Remarks}

\crefname{example}{example}{examples}
\Crefname{Example}{Example}{Examples}

\crefname{figure}{figure}{figures}
\Crefname{Figure}{Figure}{Figures}

\newtheorem{assumption}{\textbf{H}\hspace{-3pt}}
\Crefname{assumption}{\textbf{H}\hspace{-3pt}}{\textbf{H}\hspace{-3pt}}
\crefname{assumption}{\textbf{H}}{\textbf{H}}

\Crefname{assumptionG}{\textbf{G}\hspace{-3pt}}{\textbf{G}\hspace{-3pt}}
\crefname{assumptionG}{\textbf{G}}{\textbf{G}}

\makeatother




\def\Abor{\mathsf{A}}

\newcommandx{\functionspace}[2][1=+]{\mathbb{M}_{#1}(#2)}

\newcommand{\argmax}{\operatorname*{arg\,max}}
\newcommand{\argmin}{\operatorname*{arg\,min}}

\newcommandx{\VarDeux}[3][3=]{\operatorname{Var}^{#3}_{#1}\left\{#2 \right\}}

\newcommand{\1}{\mathbbm{1}}
\newcommand{\indiD}[1]{\mathbbm{1}_{\{#1\}}}
\newcommand{\B}{\mathcal{B}}
\newcommand{\borelSet}{\mathcal{B}}

\newcommand{\bx}{\boldsymbol{x}}

\newcommand{\bw}{\boldsymbol{w}}

\newcommand{\LeftEqNo}{\let\veqno\@@leqno}




\newcommand{\N}{\ensuremath{\mathbb{N}}}

\def\rmq{\mathrm{q}}


\newcommand{\abs}[1]{\left\vert #1 \right\vert}
\newcommand{\tvnorm}[1]{\| #1 \|_{\mathrm{TV}}}

\newcommandx{\Vnorm}[2][1=V]{\| #2 \|_{#1}}
\newcommandx{\VnormEq}[2][1=V]{\left\| #2 \right\|_{#1}}
\newcommandx{\norm}[3][1=,2=]{\ifthenelse{\equal{#1}{}}{
    \ifthenelse{\equal{#2}{}} {\left\Vert #3 \right\Vert} {\left\Vert
        #3 \right\Vert_{#2}} } {\ifthenelse{\equal{#2}{}}{\left\Vert
        #3 \right\Vert^{#1}_{#2}}{\left\Vert #3 \right\Vert^{#1}_{#2}}
  }}
\newcommandx{\normLigne}[3][1=,2=]{\ifthenelse{\equal{#1}{}}{
    \ifthenelse{\equal{#2}{}} {\Vert #3 \Vert} {\Vert
        #3 \Vert_{#2}} } {\ifthenelse{\equal{#2}{}}{\Vert
        #3 \Vert^{#1}_{#2}}{\Vert #3 \Vert^{#1}_{#2}}
  }}
\newcommandx{\normUn}[2][1=]{\ifthenelse{\equal{#1}{}}{
\left\Vert #2 \right\Vert_{1}} {\left\Vert #2 \right\Vert^{#1}_{1}}}
  
  \newcommand{\parenthese}[1]{\left(#1 \right)}
  \newcommand{\parentheseLigne}[1]{(#1 )}
  \newcommand{\parentheseDeux}[1]{\left[ #1 \right]}
  \newcommand{\defEns}[1]{\left\lbrace #1 \right\rbrace }

\newcommandx{\ps}[3][1=]{\ifthenelse{\equal{#1}{}}{\left\langle#2,#3
    \right\rangle}{\left\langle#2,#3 \right\rangle_{#1}}}
\newcommandx{\psLigne}[3][1=]{\ifthenelse{\equal{#1}{}}{\langle#2,#3
    \rangle}{\langle#2,#3 \rangle_{#1}}}



\newcommandx\probaMarkovTilde[2][2=]
{\ifthenelse{\equal{#2}{}}{{\widetilde{\mathbb{P}}_{#1}}}{\widetilde{\mathbb{P}}_{#1}\left[ #2\right]}}







\newcommand{\plusinfty}{+\infty}



\newcounter{hypoconbis}
\newcounter{saveconbis}
\newcommand\debutH{\begin{list}
{\textbf{H\arabic{hypoconbis}}}{\usecounter{hypoconbis}}\setcounter{hypoconbis}{\value{saveconbis}}}
\newcommand\finH{\end{list}\setcounter{saveconbis}{\value{hypoconbis}}}

\def\ie{i.e.}
\def\eqsp{\;}

\newcommand{\ocint}[1]{\left(#1\right]}
\newcommand{\ooint}[1]{\left(#1\right)}

\newcommand{\indi}[1]{\1_{#1}}
\newcommandx{\weight}[2][2=n]{\omega_{#1,#2}}

\newcommand{\boule}[2]{\operatorname{B}(#1,#2)}

\def\MAP{\operatorname{MAP}}

\def\rmd{\mathrm{d}}
\newcommandx\sequence[3][2=,3=]
{\ifthenelse{\equal{#3}{}}{\ensuremath{\{ #1_{#2}\}}}{\ensuremath{\{ #1_{#2}, \eqsp #2 \in #3 \}}}}
\newcommandx{\sequencen}[2][2=n\in\N]{\ensuremath{\{ #1, \eqsp #2 \}}}
\newcommandx\sequenceDouble[4][3=,4=]
{\ifthenelse{\equal{#3}{}}{\ensuremath{\{ (#1_{#3},#2_{#3}) \}}}{\ensuremath{\{  (#1_{#3},#2_{#3}), \eqsp #3 \in #4 \}}}}
\newcommandx{\sequencenDouble}[3][3=n\in\N]{\ensuremath{\{ (#1_{n},#2_{n}), \eqsp #3 \}}}
\newcommand{\wrt}{w.r.t.}

\def\iid{i.i.d.}
\def\rme{\mathrm{e}}

\def\rset{\mathbb{R}}

\def\nset{\mathbb{N}}

\def\Lip{\operatorname{Lip}}

\newcommandx{\CPE}[3][1=]{{\mathbb E}^{#3}_{#1}\left[#2 \right]} 
\newcommandx{\CPVar}[3][1=]{\mathrm{Var}^{#3}_{#1}\left\{ #2 \right\}}
\newcommand{\CPP}[3][]
{\ifthenelse{\equal{#1}{}}{{\mathbb P}\left(\left. #2 \, \right| #3 \right)}{{\mathbb P}_{#1}\left(\left. #2 \, \right | #3 \right)}}

\newcommandx{\osc}[2][1=]{\mathrm{osc}_{#1}(#2)}

\newcommand{\chunk}[4][]%
{\ifthenelse{\equal{#1}{}}{\ensuremath{{#2}_{#3:#4}}}{\ensuremath{#2^#1}_{#3:#4}}
}

\def\rmv{\mathrm{v}}

\def\borelean{\mathrm{A}}

\def\convSym{\mathrm{c}}
\def\StSym{\mathrm{s}}

\def\XLang{\mathbf{X}}
\def\XE{X}
\def\ZE{Z}

\def\chull{\operatorname{co}}

\def\rmM{\mathrm{M}}

\def\bfPhi{\mathbf{\Phi}}
\def\Phibf{\mathbf{\Phi}}

\def\xstar{x^{\star}}

\def\bdrift{\beta}

\def\Fsmall{\omega}

\def\RSt{\Rdrift_{\StSym}}
\def\constSt{m}
\def\RStV{\Rdrift_{\StSym}}
\def\constStV{\constSt}

\def\Rdrift{\mathrm{R}}

\def\gaStep{\gamma}

\def\RKer{R}

\def\LL{L_f}

\def\rhoUl{\eta_{\convSym}}
\def\RUl{\mathrm{R}_{\convSym}}

\def\bargaStep{\bar{\gaStep}}

\newcommand{\ball}[2]{\mathrm{B}(#1,#2)}

\def\KOne{\mathrm{a}_{\convSym}}
\def\bOne{\bdrift_{\convSym}}

\def\lsc{l.s.c}
\def\fU{f}
\def\Ul{U^{\lambdaMY}}
\def\pil{\pi^{\lambdaMY}}

\def\gU{g}

\def\gconv{\mathrm{g}}
\def\gconvl{\gconv^{\lambdaMY}}
\def\gUl{\gU^{\lambdaMY}}
\def\lambdaMY{\lambda}
\def\convSet{\mathcal{K}}
\def\indiK{\iota}
\def\prox{\operatorname{prox}}
\newcommandx{\proj}[2]{\ensuremath{\operatorname{proj}_{#2}\left(  #1\right)}}
\newcommandx{\projK}[1]{\ensuremath{\operatorname{proj}_{\convSet}\left(  #1\right)}}

\def\proxgl{\prox_{\gconv}^{\lambdaMY}}
\def\proxgul{\prox_{\gU}^{\lambdaMY}}

\def\rmM{\mathrm{M}}
\def\Uml{U^{\lambdaMY}}
\def\XEM{\XE^{\rmM}}
\def\XEA{\tilde{\XE}^{\lambdaMY}}
\def\piml{\pi^{\lambdaMY}}

\def\xgconv{x_{\gconv}}
\def\Rgconv{R_{\gconv}}
\def\rhogconv{\rho_{\gconv}}
\def\rhoUconv{\rho_{\gU}}
\def\xUconv{x_{\gU}}

\def\vol{\operatorname{Vol}}

\def\ImpSamp{\mathrm{S}}
\def\frm{h}

\def\LUl{L}

\def\ratioEst{\hat{B}}

\def\homogeneousFactorD{\lambda}

\def\Gdiff{\bar{g}}

\def\LLf{L_f}
\def\VSIAM{V_{\convSym}}
\def\lambdaSIAM{\varrho_{\convSym}}
\def\KOneD{\alpha_{\convSym}}
\def\bSIAM{b_{\convSym}}

\def\kappaSIAM{\kappa}
\def\pilgaStep{\pil_{\gaStep}}

\def\barU{\bar{U}}
\def\barpi{\bar{\pi}}

\def\Setharm{\mathsf{A}}

\def\hpdz{\mathcal{C}}
\def\hpd{\mathcal{C}^{\star}}
\def\gconvD{\mathrm{h}}
\def\proxglD{\prox_{\gconvD}^{\lambdaMY}}


\title[Efficient Bayesian computation by Proximal MCMC]{Efficient Bayesian computation by proximal Markov chain Monte Carlo: when Langevin meets Moreau}

\author[A. Durmus]{Alain Durmus$^{1}$}
\address{LTCI, Telecom ParisTech \& CNRS, 46 rue Barrault, 75634 Paris Cedex 13, France}
\email{alain.durmus@telecom-paristech.fr}

\author[E. Moulines]{Eric Moulines$^2$}
\address{Centre de Math\'ematiques Appliqu\'ees, UMR 7641, Ecole Polytechnique, France}
\email{eric.moulines@polytechnique.edu}

\author[M. Pereyra]{Marcelo Pereyra$^3$}
\address{Department of Mathematics,
University of Bristol,
University Walk, Clifton,
Bristol BS8 1TW, U.K.}
\email{marcelo.pereyra@bristol.ac.uk}

\begin{document}

\maketitle

\smallskip
\noindent
{\it Keywords:\,} Mathematical imaging; inverse problems; Bayesian inference; Markov chain Monte Carlo methods; convex optimisation; uncertainty quantification; model selection.
\smallskip

\noindent
{\it AMS subject classification (2010):\,}
primary 65C40, 68U10, 62F15; secondary 65C60, 65J22.

\begin{abstract}
Modern imaging methods rely strongly on Bayesian inference techniques to solve challenging imaging problems. Currently, the predominant Bayesian computation approach is convex optimisation, which scales very efficiently to high dimensional image models and delivers accurate point estimation results. However, in order to perform more complex analyses, for example image uncertainty quantification or model selection, it is necessary to use more computationally intensive Bayesian computation techniques such as Markov chain Monte Carlo methods. This paper presents a new and highly efficient Markov chain Monte Carlo methodology to perform Bayesian computation for high dimensional models that are log-concave and non-smooth, a class of models that is central in imaging sciences. The methodology is based on a regularised unadjusted Langevin algorithm that exploits tools from convex analysis, namely Moreau-Yoshida envelopes and proximal operators, to construct Markov chains with favourable convergence properties. In addition to scaling efficiently to high dimensions, the method is straightforward to apply to models that are currently solved by using proximal optimisation algorithms. We provide a detailed theoretical analysis of the proposed methodology, including asymptotic and non-asymptotic convergence results with easily verifiable conditions, and explicit bounds on the convergence rates. The proposed methodology is demonstrated with four experiments related to image deconvolution and tomographic reconstruction with total-variation and $\ell_1$ priors, where we conduct a range of challenging Bayesian analyses related to uncertainty quantification, hypothesis testing, and model selection in the absence of ground truth.
\end{abstract}

\def\sectionautorefname{Section}
\def\subsectionautorefname{Section}
\def\subsubsectionautorefname{Section}
\def\corollaryautorefname{Corollary}


\section{Introduction}
Image estimation problems are ubiquitous in science and engineering. For example,  problems related to image denoising \cite{Lebrun:2013}, deconvolution \cite{Bonettini2013}, compressive sensing reconstruction \cite{Donoho2006}, super-resolution \cite{Veniamin2016}, tomographic reconstruction \cite{Lustig:2007}, inpainting \cite{Chan2011}, source separation \cite{Zhengming:2012}, fusion \cite{Haro:2015}, and phase retrieval \cite{Candes:2013}. The development of new theory, methodology, and algorithms for imaging problems is a focus of significant research efforts. Particularly, convex imaging problems have received a lot of attention lately, leading to major developments in this area.

Most recent works in the imaging literature adopt formal mathematical approaches to analyse problems, derive solutions, and study the underpinning algorithms. There are several mathematical frameworks available to solve imaging problems \cite{somersalo:2005}. In particular, many modern methods are formulated in the Bayesian statistical framework, which relies on statistical models to represent the data observation process and the prior knowledge available, and then derives solutions by using inference techniques rooted in Bayesian decision theory \cite{somersalo:2005}. 

There are currently two main approaches in Bayesian imaging methodology. The predominant approach is to use a convex formulation of the estimation problem and postulate a prior distribution that is log-concave. This leads to a posterior distribution that is also log-concave, and where maximum-a-posteriori (MAP) estimation can be computed efficiently by using high dimensional convex optimisation algorithms \cite{Green2015}. In addition to scaling well to large settings, convex optimisation algorithms have two additional advantages that are important for practical Bayesian computation: they are well understood theoretically and their conditions for convergence are clear and simple to check; and the main algorithms are general and can be applied similarly to wide range of problems. However, as we will discuss later, convex optimisation on its own cannot deliver basic aspects of the Bayesian paradigm, and struggles to support the complex statistical analyses that are inherent to modern scientific reasoning and decision-making, such as uncertainty quantification and model comparison analyses.

The second main approach in Bayesian imaging methodology is based on stochastic simulation algorithms, namely Markov chain Monte Carlo (MCMC) algorithms. Such methods, which were already actively studied over two decades ago, have regained significant attention lately because of their capacity to address very challenging imaging problems that are beyond the scope of optimisation-based techniques \cite{Pereyra2016}. Additionally to complex models such as hierarchical or empirical Bayesian models, MCMC methods also enable advanced analyses such as hypotheses test and model selection. Unfortunately, despite great progress in high dimensional MCMC methodology, solving imaging problems by stochastic simulation remains too expensive for applications involving moderate or large datasets. Another drawback of existing MCMC methods is that the conditions for their convergence are often significantly more difficult to check than those of optimisation schemes. As a result, most practitioners only assess convergence empirically. It is worth mentioning that some of these limitations can be partially mitigated by resorting to variational Bayes or message passing approximations, which are generally significantly more computationally efficient than stochastic simulation. Unfortunately, such approximations are available only for specific models, and we currently have little theory to analyse the approximation error involved. Similarly, it is generally difficult to provide convergence guarantees for the related algorithms, which often suffer from local convergence issues. Observe that this is in sharp contrast with the convex optimisation approach, which despite its clear limitations, is general and well understood theoretically.

In summary, convex optimisation and MCMC methods have complementary strengths and weaknesses related to their computational efficiency, theoretical underpinning, and the inferences they can support. As a result, it is increasingly acknowledged that the two methodologies should be used together. In this view, the future imaging methodological toolbox should provide a flexible framework where it is possible to perform very efficiently a first analysis of a full dataset by using convex optimisation algorithms, followed by in-depth analyses by MCMC simulation for specific data (e.g., particular data that will be used as evidence to support a hypothesis or a decision). Also, in this framework practitioners should be able to use MCMC algorithms to perform preliminary analyses, which then set the basis for a full scale analysis with convex optimisation techniques. These could be, for example, exploratory analyses with selected data aimed at calibrating the model or performing Bayesian model selection, and benchmarking analyses to assess efficient approximations (e.g., optimisation-based approximate confidence intervals \cite{Pereyra2016b}). Unfortunately, it is currently difficult to use optimisation and MCMC methodologies in this complementary manner because optimisation methods use predominantly non-conjugate priors that are not smooth, such as priors involving the $\ell_1$ or the total-variation noms, whereas MCMC methods are mainly restricted to models with priors that are either conjugate to the likelihood function, or that are smooth with Lipchitz gradients (the latter enables efficient high dimensional MCMC algorithms such as the Metropolis-adjusted Langevin algorithm or Hamiltonian Monte Carlo \cite{Pereyra2016}).

Proximal MCMC algorithms, proposed recently in \cite{Pereyra2015}, are an important first step towards bridging this methodological gap between convex optimisation and stochastic simulation. Unlike conventional high dimensional MCMC algorithms that use gradient mappings and require Lipchitz differentiability, proximal MCMC algorithms draw their efficiency from convex analysis, namely proximal mappings and Moreau-Yoshida envelopes. This allows MCMC-based Bayesian computation for precisely the type of models that are solved by convex optimisation (i.e., high dimensional models that are log-concave but not smooth), which in turn enables advanced Bayesian analyses for these models (e.g., see \cite{Pereyra2016b,Atchade2016} for applications of proximal MCMC to Bayesian uncertainty quantification and sparse regression). However, the proximal MCMC algorithms presented in \cite{Pereyra2015} have three shortcomings that limit their impact in imaging sciences, and which this paper seeks to address. First, the conditions that guarantee the convergence of the algorithms are difficult to check in practice. Second, the algorithms assume that it is possible to compute the proximal mapping of the log-posterior distribution; in practice however this mapping is often approximated by using a forward-backward splitting scheme. Third, the algorithms rely on a Metropolis-Hastings correction step to remove the asymptotic bias introduced by the approximations and to guarantee that the Markov chains target the desired posterior distribution. Unfortunately, this correction step can degrade significantly the efficiency of the algorithms (i.e., the asymptotic bias is removed at the expense of a potentially significant increase in estimation variance and some additional bias from the Markov chain's transient or burn-in regime).

This paper presents a new and significantly better proximal MCMC methodology that address all the issues of the original proximal algorithms discussed above. This new methodology is highly computationally efficient and general, in that it can be applied straightforwardly to most models currently addressed by convex optimisation (in particular, to any model that can be solved by forward-backward splitting). Moreover, we provide simple theoretical conditions to guarantee the convergence of the Markov chains, as well as bounds on its convergence rate. 

The remainder of the paper is organised as follows: Section \ref{sec:bac} defines notation, introduces the class of models considered, and recalls the Langevin MCMC approach that is the basis of our method. In Section \ref{sec:more-yosida-regul} we present the proposed MCMC method, analyse its theoretical properties in detail, provide practical implementation guidelines, and discuss connections with the original proximal MCMC algorithms described in \cite{Pereyra2015}. Section \ref{sec:experiments} illustrates the
methodology on four experiments related to image deconvolution and tomographic reconstruction with total-variation and $\ell_1$ sparse priors, where we conduct a range of challenging Bayesian analyses related to model comparison and uncertainty quantification. Conclusions and perspectives for future work are reported in Section \ref{sec:conclusion}. Proofs are finally reported in Appendices  \ref{sec:proof-crefpr-meas} and \ref{HME}.


\section{Bayesian analysis and computation}\label{sec:bac}
\subsection{Notations and Conventions}
Denote by $\mathcal{B}(\rset^d)$ the Borel $\sigma$-field of $\rset^d$. For all $\Abor \in \mathcal{B}(\rset^d)$, denote by $\vol(\Abor)$ its Lebesgue measure. Denote by $\functionspace[]{\rset^d}$ the set of all Borel measurable functions on $\rset^d$ and for $f \in \functionspace[]{\rset^d}$, $\Vnorm[\infty]{f}= \sup_{x \in \rset^d} \abs{f(x)}$. For $\mu$ a probability measure on $(\rset^d, \mathcal{B}(\rset^d))$ and $f \in \functionspace[]{\rset^d}$ a $\mu$-integrable function, denote by $\mu(f)$ the integral of $f$ \wrt~$\mu$.
For two probability measures $\mu$ and $\nu$ on $(\rset^d, \mathcal{B}(\rset^d))$, the total variation norm of $\mu$ and $\nu$ is defined as
\begin{equation*}
  \tvnorm{\mu-\nu} = \sup_{f \in \functionspace[]{\rset^d}, \Vnorm[\infty]{f} \leq 1}  \abs{\int_{\rset^d } f(x) \rmd \mu (x) - \int_{\rset^d}  f(x) \rmd \nu (x)} \eqsp
\end{equation*}
 Let $f : \rset^d \to
\ocint{-\infty , \plusinfty}$.  If $f$ is a Lipschitz function, namely
there exists $C \geq 0$ such that for all $x,y \in \rset^d$, $\abs{f(x) - f(y)} \leq C \norm{x-y}$, then denote
\begin{equation*}
\norm{f}_{\Lip} =
\inf \{ \abs{f(x) - f(y)} \norm[-1]{x-y} \ | \ x,y \in \rset^d , x
\not = y \}  \eqsp.
\end{equation*}
Denote for all $M \in \rset$, $\{ f \leq M \} = \{ z \in
\rset^d \ | \ f(z) \leq M \}$. $f$ is said to be lower semicontinuous
if for all $M \in \rset$, $\{ f \leq M \}$ is a closed subset of
$\rset^d$.
For $k \geq 0$, denote by $C^k(\rset^d)$, the set of
k-times continuously differentiable functions. For $f \in
C^1(\rset^d)$, denote by $\nabla f$ the gradient of $f$.
%
%
 Denote for all $\rmq
\geq 1$, the $\ell_{\rmq}$ norm $\norm[][\rmq]{\cdot}$ on $\rset^d$ by
for all $x \in \rset^d$, $\norm[][\rmq]{x} = (\sum_{i=1}^d
\abs{x_i}^{\rmq})^{1/\rmq}$. Denote by $\norm{\cdot}$ the Euclidian
norm on $\rset^d$.  For all $x \in \rset^d$ and $M >0$, denote by
$\boule{x}{M}$, the ball centered at $x$ of radius $M$. For a closed
convex $\convSet \subset \rset^d$, denote by $\proj{\cdot}{\convSet}$,
the projection onto $\convSet$, and $\indiK_{\convSet}$ the convex
indicator of $\convSet$ defined by $\indiK_{\convSet}(x)= 0$ if $x \in
\convSet$, and $\indiK_{\convSet}(x)= \plusinfty$ otherwise.
 In the sequel, we take the convention that $\inf \emptyset = \infty$,
$1/\infty = 0$ and for $n,p \in \nset$, $n <p$ then $\sum_{p}^n =0$ and $\prod_p
^n = 1$.


\subsection{Imaging inverse problems}
We consider imaging inverse problems where we seek to estimate an unknown image $x \in \mathbb{R}^d$ from an observation $y$, related to $x$ by a forward statistical model with likelihood function $p(y|x)$. Following a Bayesian approach, we use prior knowledge about $x$ to reduce the uncertainty and deliver accurate estimation results \cite{somersalo:2005}. Precisely, we specify a prior distribution $p(x)$ promoting expected properties (e.g., sparsity, piecewise regularity, or smoothness), and combine observed and prior information by using Bayes' theorem, leading to the posterior distribution \cite{cprbayes}
$$
\pi(x) \triangleq p(x|y) = \frac{p(y|x)p(x)}{\int_{\mathbb{R}^d} p(y|x)p(x)\textrm{d}x}\,,
$$
that we henceforth denote as $\pi$, and which models our knowledge about $x$ after observing $y$. In this paper we focus on inverse problems that are convex. We assume that $\pi$ is log-concave, \ie
\begin{eqnarray}\label{posterior}
\pi(x) = \frac{\rme^{-U(x)}}{\int_{\mathbb{R}^d} \rme^{-U(s)} \rmd s \eqsp},
\end{eqnarray}
for some measurable function $U : \rset^{d} \to \ocint{-\infty,\plusinfty}$ satisfying the following condition.
 \begin{assumption}
   \label{assum:form-potential}
   $U = \fU+ \gU$, where $\fU : \rset^d \to \rset$ and $\gU : \rset^d \to \ocint{-\infty,\plusinfty}$ are two lower bounded functions
   satisfying:
   \begin{enumerate}[label=(\roman*)]
   \item $\fU$ is convex, continuously differentiable, and gradient Lipschitz with Lipschitz constant $\LL$, \ie~for all $x,y \in \rset^d$
     \begin{equation}
       \label{eq:gradient-Lip}
       \norm{\nabla \fU(x) - \nabla \fU(y)} \leq \LL \norm{x-y} \eqsp.
     \end{equation}
     \item \label{item:assum-prior} $\gU$ is proper, convex and lower semi continuous (\lsc).
   \end{enumerate}
 \end{assumption}

Notice that the class \eqref{posterior} comprises many important models that are used extensively in modern imaging sciences. Particularly, models of the form $U(x) = \|y-Ax\|^2 /2\sigma^2 + \phi(B x)$ for some linear operators $A$, $B$, and convex regulariser $\phi$ that is typically non-smooth, and which may also encode convex constraints on the parameter space. In such cases $\fU(x) = \|y-Ax\|^2 /2\sigma^2$ and $g(x) = \phi(B x)$ for instance.

When $x$ is high-dimensional, drawing inferences from $\pi$ directly is generally not possible. Instead we use summaries, particularly point estimators, that capture some of the information about $\pi$ that is relevant for the application considered \citep{cprbayes}. In particular, modern statistical imaging methodology relies strongly on the maximum-a-posteriori (MAP) estimator defined by: 
\begin{eqnarray}\label{map}
\begin{split}
\hat{x}_{\MAP} = \argmax_{x \in \mathbb{R}^d} \pi(x) = \argmin_{x \in \mathbb{R}^d} U(x) \eqsp,
\end{split}
\end{eqnarray}
which can often be computed efficiently, even in very large problems, by using proximal convex optimisation algorithms \citep{pesquet:2011, Parikh2013}. From the practitioner's viewpoint, this is a main advantage \wrt~most other summaries that require high-dimensional integration \wrt~$\pi$, which is generally significantly more computationally expensive \citep{pereyra:2016}.

However, in its raw form, mathematical imaging based on optimisation struggles to support the complex statistical analyses that are inherent to modern scientific reasoning. For example, such methods are typically unable to assess the uncertainty in the solutions delivered, support uncertainty quantification and decision-making procedures (e.g. hypothesis tests). Similarly, they have difficulty checking and comparing alternative mathematical models intrinsically (i.e., without ground truth available). To perform such advanced (often Bayesian) analyses and deliver the full richness of the statistical paradigm it is necessary to use Monte Carlo stochastic simulation algorithms \citep{Green2015}. 

As mentioned previously, the high-dimensionality and the lack of smoothness of $\pi$ pose important challenges from a Bayesian computation viewpoint. This paper presents a new MCMC methodology to tackle this problem. The proposed methodology is general, robust, theoretically sound, and computationally efficient, and can be applied straightforwardly to any model satisfying \eqref{posterior} that can be addressed by using proximal convex optimisation (particularly by using the gradient of $\fU$ and the proximal operator of $\gU$, similarly to forward-backward splitting algorithms).

 \subsection{Bayesian computation: unadjusted and Metropolis-adjusted Langevin algorithms}
 The MCMC method proposed in this paper is derived from the
 discretization of overdamped Langevin diffusions.  
 Let $\barU : \rset^d \to \rset$ be a continuously differentiable function and consider the
 Langevin stochastic differential equations (SDE)  given by
 \begin{equation}
   \label{eq:langevin-1}
   \rmd \XLang_t = -\nabla \barU(\XLang_t) \rmd t + \sqrt{2} \rmd B^d_t \eqsp,
 \end{equation}
 where $(B_t^d)_{t \geq 0}$ is a $d$-dimensional Brownian
 motion. Under additional mild assumptions, this equation has a unique
 strong solution. In addition if $\int_{\rset} \rme^{-\barU(x)} \rmd x < \infty$,  then $\barpi(x) \propto \rme^{-\barU(x)}$ is the unique invariant
 distribution of the semi-group associated with the
 Langevin SDE, see \cite{khasminskii:1960}. Consequently, if we could solve \eqref{eq:langevin-1} and let $t \rightarrow \infty$, this would provide samples from $\bar{\pi}$ useful for Bayesian computation. Since it is
 possible to analytically solve \eqref{eq:langevin-1} only in very
 specific cases, we consider a discrete-time Euler-Maruyama
 approximation and obtain the following Markov chain $(\XE_k)_{k\geq
   0}$: for all $k \geq 0$
 \begin{equation}
 \label{eq:definition-Euler}
\textrm{ULA}:\, \XE_{k+1} = \XE_k - \gaStep \nabla \barU(\XE_k) + \sqrt{2 \gaStep} \ZE_{k+1} \eqsp,
 \end{equation}
 where $\gaStep >0$ is a given
 step size and $(\ZE_{k})_{k\geq 1}$ is a sequence of
 \iid~$d$-dimensional standard Gaussian random variables. This scheme
 has been first introduced in molecular dynamics by \cite{ermak:1975}
 and \cite{parisi:1981}, and then popularized in the machine learning community by
 \cite{grenander:1983}, \cite{grenander:miller:1994} and in computational statistics by
 \cite{neal:1992} and \cite{roberts:tweedie-Langevin:1996}. Following  \cite{roberts:tweedie-Langevin:1996}, this algorithm is referred to as the Unadjusted Langevin Algorithm (ULA).

 In Bayesian computation, the samples $(\XE_k)_{k\geq 0}$ generated by
 ULA \eqref{eq:definition-Euler} are used to estimate probabilities and expectations
 w.r.t. $\barpi$. This scheme has attracted significant attention in the late, particularly for high-dimensional problems were most Monte Carlo methods struggle. 
Theory for ULA advanced significantly recently with the development of non-asymptotic bounds in total variation
distance between $\barpi$ and the marginal laws of the Markov chain
 $(\XE_k)_{k \geq 0}$ defined by ULA \cite{dalalyan:2014,durmus:moulines:2015}, with explicit dependence on the stepsize
 $\gaStep$ and the dimension $d$ (see
 \Cref{ssec:convergence_analysis}). These new theoretical results are
 important because they provide estimation accuracy guarantees for ULA, as well as
 valuable new insights into the convergence properties of the
 algorithm. In particular, they establish that if $\barU$ is convex and
 gradient Lipchitz, then ULA's convergence properties deteriorate at
 most polynomially as $d$ increases.
 Remarkably, if in addition $\barU$ is strongly
 convex, then it deteriorates at most linearly with $d$, confirming
 the empirical evidence that ULA is a highly computationally efficient method to sample in
 high-dimensional settings.

It is worth emphasising at this point that this deep understanding of ULA is very recent. Indeed, without a proper theoretical underpinning, ULA has been traditionally regarded as unreliable and rarely applied directly in statistics or statistical image processing. Instead, most applications reported in the literature adopt a safe approach and complement ULA with a Metropolis-Hastings correction step targeting $\barpi$, as recommended by \cite{rossky:doll:friedman:1978} and \cite{roberts:tweedie-Langevin:1996}. This correction guarantees that the resulting Metropolis Adjusted Langevin Algorithm (MALA) generates a reversible Markov chain with respect to $\barpi$, and therefore eliminates the asymptotic bias. And perhaps more importantly, it places ULA within the sound theoretical framework of Metropolis-Hasting algorithms. For sufficiently smooth densities MALA inherits the good convergence properties of ULA and scales efficiently to high-dimensional settings \citep{roberts:tweedie-Langevin:1996}. 

Unfortunately, neither ULA nor MALA are well defined for non-smooth target
densities, which strongly limits their
application to modern mathematical imaging problems. In fact, both theory and experimental
evidence show that ULA and MALA often run into difficulties if $\pi$ is not sufficiently regular. For example, when $\nabla\log\pi$ is not Lipchitz  continuous ULA is generally explosive and MALA is not geometrically ergodic (see \cite[Figure
2]{roberts:tweedie-Langevin:1996, pereyra:2015}). Similarly, when $\nabla\log\pi$  is subdifferentiable and therefore, at least from a purely algorithmic viewpoint, the algorithms could still be applied, the theory underpinning the ULA and MALA collapses and even the convergence of the time-continuous Langevin diffusion driving the algorithms becomes unclear.
Moreover, many applications involve constraints on the parameter space and then $\pi$ is supported only on a bounded convex set $\convSet$. In
such case, $\nabla\log\pi$ is bounded on $\convSet$ and infinite or not defined
outside $\convSet$. Then it is not possible to use ULA, and MALA
typically behaves very poorly (the algorithm gets ``stuck'' whenever
the proposal drives the Markov chain outside $\convSet$). Following a proximal MCMC approach \cite{pereyra:2015}, in the following section we present a new ULA that exploits tools from convex calculus and proximal optimisation to address these issues, and sample efficiently from high-dimensional log-concave densities of the form H\ref{assum:form-potential} that are beyond the scope of conventional ULAs and MALAs.




\section{Proximal MCMC: Moreau-Yosida regularised Unadjusted Langevin Algorithm}
\label{sec:more-yosida-regul}
\subsection{Proposed method}
\label{ssec:MYULAl}
A central idea in this work is to replace the non-smooth potential $U$ with a carefully designed smooth approximation $U^{\lambda}$ which, by construction, has the following two key properties: 1) its Euler-Maruyama discrete-time approximations are always stable and have favourable convergence properties, and 2) we can make  $\pi^\lambda\propto\rme^{-U^{\lambda}}$ arbitrarily close to $\pi$ by adjusting an approximation parameter $\lambda >0$. 

In a manner akin to \cite{pereyra:2015}, we define such approximations by using Moreau-Yosida envelopes
\cite{Combettes2011} which we recall below. Let $\gconv :
\rset^d \to \ocint{-\infty,+\infty}$ be a \lsc~convex function and
$\lambdaMY >0$. The $\lambda$-Moreau-Yosida envelope of $\gconv$ is a carefully regularised approximation of $g$ given by
\begin{equation}
\label{eq:id-MY-env}
\gconv^{\lambdaMY}(x) = \min_{y \in \rset^d} \defEns{\gconv(y) + (2 \lambdaMY)^{-1}\norm[2]{x-y}} \eqsp,
\end{equation}
where $\lambda$ is a regularisation parameter that controls a trade-off between the regularity properties of $\gconv^{\lambdaMY}$ and the approximation error involved. Remarkably, by \cite[Example 10.32, Theorem 9.18]{rockafellar:wets:1998}, the approximation $\gconvl$ inherits the convexity of $g$ and is always continuously differentiable, even if $g$ is not. In fact, $\gconvl$ is gradient Lipshitz  \cite[Proposition 12.19]{rockafellar:wets:1998}: for all $x,y \in \rset^d$,
\begin{equation}
\label{eq:lip_moreau_yosida}
\norm{\nabla \gconvl(x) - \nabla \gconvl(y)} \leq \lambdaMY^{-1} \norm{x-y} \eqsp.
\end{equation}
The gradient is given by for all $x \in \rset^d$
\begin{equation}
\label{eq:definition-grad-prox}
\nabla \gconvl(x) = \lambdaMY^{-1}\parenthese{x-\proxgl(x)} \eqsp,
\end{equation}
where
\begin{equation}
\label{eq:id-MY-env}
\proxgl(x) = \argmin_{y \in \rset^d} \defEns{\gconv(y) + (2 \lambdaMY)^{-1}\norm[2]{x-y}} \eqsp,
\end{equation}
is the proximal operator of $\gconv$ \cite{Combettes2011}. This operator is used extensively in imaging methods based on convex optimisation, where it is generally computed efficiently by using a specialised algorithm \cite{pesquet:2011,Parikh2013}. Indeed, similarly to gradient mappings, $\proxgl$ also moves points in the direction of the minimum of $g$ (by an amount related to the value of $\lambda$), and has many properties that are useful for devising fixed-point methods \cite{Combettes2011}.

In addition, $\gconvl$ envelopes $\gconv$ from below: for all $x \in \rset^d$, $\gconv^{\lambdaMY}(x) \leq \gconv(x)$, and since for  $0 < \lambdaMY < \lambdaMY'$ and $x,y \in \rset^d$,
$\gconv(y) + (2 \lambdaMY')^{-1}\norm[2]{x-y} \leq \gconv(y) + (2 \lambdaMY)^{-1}\norm[2]{x-y}$, we get that for all $x \in \rset^d$ $\gconv^{\lambdaMY'}(x) \leq \gconv^{\lambdaMY}(x)$. By \cite[Theorem 1.25]{rockafellar:wets:1998}, $\gconvl$ converges pointwise to $\gconv$ as $\lambdaMY$ goes to $0$, \ie\ for all $x \in \rset^d$,
\begin{equation}
\label{eq:limit-d-lambda}
\lim_{\lambdaMY \to 0}\gconvl(x) = \gconv(x) \eqsp.
\end{equation}
Hence, $\gconvl$ provides a convex and smooth approximation to $g$ that we can make arbitrarily close to $\gconv$ by adjusting the value of $\lambda$.



So under \Cref{assum:form-potential}, if  $\gU$ is
not continuously differentiable, but the proximity operator associated with $\gU$ is
available, we can consider sampling algorithms that use the $\lambdaMY$-Moreau-Yosida envelope
$\gUl$ instead of $g$. Here we propose to replace the potential $U$ with the approximation $\Uml:\rset^d \to \rset$ defined for all $x \in \rset^d$ by 
 \begin{equation*}
\Uml(x) = \gUl(x) + \fU(x) \eqsp,
\end{equation*}
which we will use to define a surrogate target density $\pi^\lambda \propto \rme^{-U^\lambda}$. We will see that such approximation is endowed with very useful regularity and approximation accuracy properties.

\Cref{propo:finite-measure-MY} below implies that the probability measure $\piml$ on $\rset^d$,  with density  with respect to the Lebesgue measure, also denoted by $\piml$ and given for all $x \in \rset^d$ by 
$$
\pi^\lambda (x)= \frac{\rme^{-U^\lambda(x)}}{\int_{\rset^d} \rme^{-U^\lambda (s)} \rmd s} \eqsp,
$$
is well defined, log-concave, Lipschitz continuously differentiable, and as close to $\pi$ as required.

\begin{assumption}
  \label{assum:integrabilite}
  Assume that one of these two conditions holds:
  \indent\begin{enumerate}[label=(\roman*)]
  \item\label{assum:integrable_g} $\rme^{-g}$ is integrable with respect to the Lebesgue measure.
  \item\label{assum:lipschitz_g} $g$ is Lipschitz.
  \end{enumerate}
\end{assumption}
\begin{proposition}
  \label{propo:finite-measure-MY}
Assume \Cref{assum:form-potential} and \Cref{assum:integrabilite}. 
\begin{enumerate}[label=\alph*)]
\item  For all $\lambdaMY >0$, $\pi^\lambda$ defines a proper density of a probability measure on $\rset^d$, \ie 
\begin{equation*}
0 < \int_{\rset^d} \rme^{-\Uml(y)}\rmd y < \plusinfty \eqsp.
\end{equation*}
\item For all $\lambdaMY >0$,  $\pi^\lambda$ is log-concave and  continuously differentiable with  
\begin{equation}
\label{eq:definition-grad-prox_U}
   \nabla U^\lambda (x)=-\nabla \log \pi^\lambda(x) =  \nabla f(x) +\lambdaMY^{-1}(x-\prox^\lambda_{ g}(x)) \eqsp.
\end{equation}
In addition, $\nabla \Uml$ is  Lipschitz with  constant $L \leq \LL + \lambda^{-1}$.
\item 
   \label{item:propo:dist_TV_MY_1}
The approximation $\pi^\lambda$ converges to $\pi$ as $\lambda \downarrow 0$ in total variation norm, \ie 
   \begin{equation*}
   \lim_{\lambdaMY \to 0} \tvnorm{\piml-\pi} = 0  \eqsp.
   \end{equation*}
\item 
   \label{item:propo:dist_TV_MY_2}
If \Cref{assum:integrabilite}-\ref{assum:lipschitz_g} then for all $\lambdaMY >0$, 
\begin{equation*}
\tvnorm{\piml-\pi} \leq \lambdaMY \norm[2][\Lip]{\gU} \eqsp.
\end{equation*}
\end{enumerate}
\end{proposition}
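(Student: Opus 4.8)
The plan is to dispatch (b) first, as it underpins the rest. Since $\gUl$ is convex and $C^1$ with $\nabla\gUl(x)=\lambdaMY^{-1}(x-\proxgul(x))$ by \eqref{eq:definition-grad-prox}, and $\fU$ is convex and $C^1$ by \Cref{assum:form-potential}, the sum $\Uml=\fU+\gUl$ is convex and continuously differentiable; this gives log-concavity of $\piml$ and the identity \eqref{eq:definition-grad-prox_U}. The triangle inequality together with \eqref{eq:gradient-Lip} and \eqref{eq:lip_moreau_yosida} yields $\norm{\nabla\Uml(x)-\nabla\Uml(y)}\le(\LL+\lambdaMY^{-1})\norm{x-y}$. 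For the positivity in (a), $\Uml$ is finite-valued and continuous, so $\rme^{-\Uml}>0$ everywhere and its integral over any ball is positive. For finiteness under \Cref{assum:integrabilite}-\ref{assum:lipschitz_g}, I use the classical Moreau estimate $0\le\gU-\gUl\le(\lambdaMY/2)\norm[2][\Lip]{\gU}$ (upper bound by taking $y=x$; lower bound by minimising $\gU(x)-\norm[][\Lip]{\gU}\,t+(2\lambdaMY)^{-1}t^2$ over $t=\norm{x-\proxgul(x)}\ge0$), whence $\int\rme^{-\Uml}\le\rme^{(\lambdaMY/2)\norm[2][\Lip]{\gU}}\int\rme^{-U}<\infty$, the last integral being finite because $\pi$ is a density in \eqref{posterior}.

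The delicate branch of (a) is \Cref{assum:integrabilite}-\ref{assum:integrable_g}, where I must pass from $\int\rme^{-\gU}<\infty$ to $\int\rme^{-\Uml}<\infty$; I expect this to be the main obstacle, since it is the only step needing genuine convex-analytic input rather than bookkeeping. The plan is to first show that a proper \lsc~convex $\gU$ with $\int\rme^{-\gU}<\infty$ has bounded sublevel sets and hence grows at least linearly, i.e.\ there exist $a>0$ and $c\in\rset$ with $\gU(y)\ge a\norm{y}-c$ for all $y$. This is a recession-function argument: were the recession function to vanish in some direction, $\rme^{-\gU}$ would fail to be integrable along an associated cone, so it is minorised by $a\norm{\cdot}$ using positive homogeneity and compactness of the unit sphere. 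This minorisation transfers to the envelope via $\norm{y}\ge\norm{x}-\norm{x-y}$: writing $t=\norm{x-y}$,
\[
\gUl(x)=\min_{y\in\rset^d}\bigl\{\gU(y)+(2\lambdaMY)^{-1}\norm[2]{x-y}\bigr\}\ge a\norm{x}-c+\min_{t\ge0}\bigl\{-a t+(2\lambdaMY)^{-1}t^2\bigr\}=a\norm{x}-c-a^2\lambdaMY/2 .
\]
Since $\fU$ is bounded below, $\int\rme^{-\Uml}\le\rme^{-\inf\fU+c+a^2\lambdaMY/2}\int_{\rset^d}\rme^{-a\norm{x}}\rmd x<\infty$.

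For (c), the monotonicity recalled before \eqref{eq:limit-d-lambda} gives $\gUl\uparrow\gU$ as $\lambdaMY\downarrow0$, so $\rme^{-\Uml}\downarrow\rme^{-U}$ pointwise; since $\rme^{-\Uml}\le\rme^{-U^{\lambda_0}}$ for $\lambda\le\lambda_0$ and the latter is integrable by (a), dominated convergence gives $Z_\lambda\defeq\int\rme^{-\Uml}\to Z\defeq\int\rme^{-U}$. Hence $\piml\to\pi$ pointwise, and Scheffé's lemma upgrades pointwise convergence of these probability densities to $\tvnorm{\piml-\pi}=\int\abs{\piml-\pi}\to0$.

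Finally (c) is refined into the quantitative rate of (d). Setting $\delta=\gU-\gUl\in[0,M]$ with $M=(\lambdaMY/2)\norm[2][\Lip]{\gU}$ from the Moreau estimate above, one has $\piml=\pi\,\rme^{\delta}/c$ with $c=\PE_{\pi}[\rme^{\delta}]\in[1,\rme^{M}]$, so
\[
\tvnorm{\piml-\pi}=\int\abs{\piml-\pi}=c^{-1}\PE_{\pi}\bigl|\rme^{\delta}-c\bigr|\le\rme^{M}-1 ,
\]
because $\rme^{\delta},c\in[1,\rme^{M}]$ and $c\ge1$. Combining this with the universal bound $\tvnorm{\piml-\pi}\le2$ and the elementary facts that $\rme^{M}-1\le2M$ for $M\le1$ and $2\le2M$ for $M\ge1$ gives $\tvnorm{\piml-\pi}\le2M=\lambdaMY\norm[2][\Lip]{\gU}$, the claimed bound.
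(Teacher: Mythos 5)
Most of your proof is sound, and where it deviates from the paper the deviations are legitimate: for (c) the paper bounds $\tvnorm{\piml-\pi}$ by $2\{1-\int_{\rset^d}\rme^{-U(y)}\rmd y/\int_{\rset^d}\rme^{-\Uml(y)}\rmd y\}$ and applies monotone convergence, while your Scheff\'e argument reaches the same conclusion; for (d) your identity $\piml=\pi\rme^{\delta}/c$ with the case analysis in $M$ recovers exactly the paper's constant, which the paper gets in one line from $1-\rme^{-u}\le u$ applied inside the same bound; and your transfer of the linear minorant to the envelope, $\gUl(x)\ge a\norm{x}-c-a^{2}\lambdaMY/2$, is cleaner than the paper's, which instead minorizes $\gUl$ by the Moreau envelope of $x\mapsto\rhogconv\norm{x-\xgconv}$ and evaluates that envelope explicitly via block soft-thresholding. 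Parts (b), (a)-positivity and (a) under \Cref{assum:integrabilite}-\ref{assum:lipschitz_g} coincide with the paper's arguments.

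The genuine gap is in the key lemma for the branch \Cref{assum:integrabilite}-\ref{assum:integrable_g}. The statement ``a proper \lsc~convex $\gU$ with $\int_{\rset^d}\rme^{-\gU(y)}\rmd y<\plusinfty$ has bounded sublevel sets and hence grows at least linearly'' is false as written. Take $d=2$ and $\gU=\indiK_{L}$, the convex indicator of a line $L$ through the origin: $\gU$ is proper, \lsc, convex and lower bounded, and $\int_{\rset^2}\rme^{-\gU(y)}\rmd y=\vol(L)=0<\plusinfty$, yet $\gU\equiv 0$ on the unbounded set $L$, so no minorant $a\norm{\cdot}-c$ with $a>0$ exists. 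Your recession argument breaks at exactly this point: to convert ``the recession function vanishes in some direction $u$'' into non-integrability of $\rme^{-\gU}$, you need a set of positive Lebesgue measure (a tube $\ball{x_0}{r}+\rset_+u$) on which $\gU$ is bounded, i.e.\ you need $\{\gU<\infty\}$ to have non-empty interior; for $\indiK_L$ it does not, the recession function does vanish along $L$, and $\rme^{-\gU}$ is nonetheless integrable. The example is not idle bookkeeping: for it, conclusion (a) itself fails, since $\gUl(x)=\inf_{y\in L}\norm[2]{x-y}/(2\lambdaMY)$ is constant along $L$, so $\rme^{-\Uml}$ is not integrable when, say, $\fU=0$.

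The missing hypothesis is $\int_{\rset^d}\rme^{-\gU(y)}\rmd y>0$, which is precisely what the paper's \Cref{lem:control-fun-convex-gene} assumes and what the first paragraph of its proof exploits (producing $d+1$ affinely independent points of $\{\gU<\infty\}$ whose convex hull has non-empty interior and on which $\gU$ is bounded). In your setting the patch is short: since \eqref{posterior} presumes $\int_{\rset^d}\rme^{-U(y)}\rmd y>0$ and $\fU$ is lower bounded, we get $\int_{\rset^d}\rme^{-\gU(y)}\rmd y\ge\rme^{\inf\fU}\int_{\rset^d}\rme^{-U(y)}\rmd y>0$; hence $\{\gU<\infty\}$ has positive Lebesgue measure, so being convex it has non-empty interior, and $\gU$, being convex, finite on an open set and lower bounded, is bounded on a small ball there. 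With that sentence added, your tube argument and the rest of your lemma go through, and your route --- recession functions plus the direct envelope minorization --- is a correct and arguably more standard alternative to the paper's volume-growth contradiction argument.
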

\begin{proof}
The proof is postponed to \Cref{sec:proof-crefpr-meas}.
\end{proof}
 Figure \ref{FigMoreauApprox} shows the approximations of two non-smooth densities that satisfy \Cref{assum:form-potential}: 
\begin{enumerate}
\item\label{item:caseLaplace} the Laplace density $\pi(x) = (1/2)\exp\parenthese{\abs{x}}$, for which 
\[
\piml(x) = \frac{\exp\defEns{(\lambdaMY/2-\abs{x})\indiD{\abs{x}
    \geq \lambdaMY} - (x^2/(2\lambdaMY))\indiD{\abs{x} < \lambdaMY} }}{2 \defEns{\rme^{-\lambdaMY/2}+(2 \uppi
  / \lambdaMY)^{1/2}(\Phibf(\lambdaMY^{1/2})-1/2)}}
\eqsp,
\]
where $\Phibf$ is the cumulative function of the standard normal distribution.
\item\label{item:caseUnif}  the uniform density $\pi(x) = (1/2)\exp\parentheseLigne{-\indiK_{[-1,1]}(x)}$, for which 
  \begin{equation*}
    \piml(x) = \defEns{2+\sqrt{2\uppi\lambdaMY}}^{-1}\exp\parentheseDeux{\defEns{-\max(\abs{x}-1,0)}^2/(2 \lambdaMY)} \eqsp.
  \end{equation*}
\end{enumerate}
We observe that the approximations are smooth and converge to $\pi$ as
$\lambda$ decreases, as described by \Cref{propo:finite-measure-MY}.
Also for these two examples, analytic expressions for
$\tvnorm{\pi-\piml}$ can be found, and \Cref{FigMoreauApprox-TV}
shows $\tvnorm{\pi-\piml}$ as a function of $\lambdaMY >0$. Notice that in the case of the Laplace density $\tvnorm{\pi-\piml}$ goes to $0$ quadratically in $\lambdaMY$ as
$\lambdaMY$ goes to $0$, which is faster than the linear bound given in \Cref{propo:finite-measure-MY}-\ref{item:propo:dist_TV_MY_2}. Also note that this bound does not apply to the uniform density, and in this case $\tvnorm{\pi-\piml}$ vanishes at rate $\sqrt{\lambdaMY}$.
\begin{figure}[htbp!]
\begin{minipage}[l2]{0.5\linewidth}
  \centering
  \centerline{\includegraphics[width=7cm]{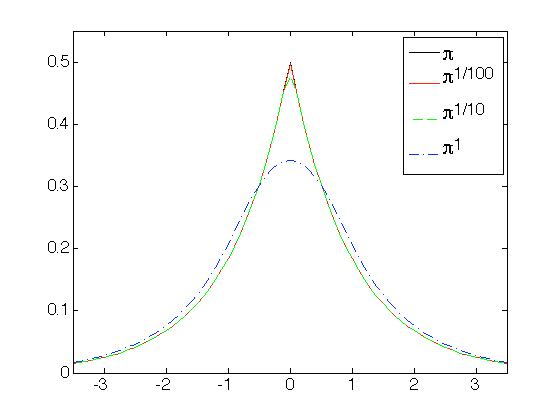}}
  \small{(a) $\pi(x) = \tfrac{1}{2}\rme^{-|x|}$}
\end{minipage}
\begin{minipage}[l2]{0.5\linewidth}
  \centering
  \centerline{\includegraphics[width=7cm]{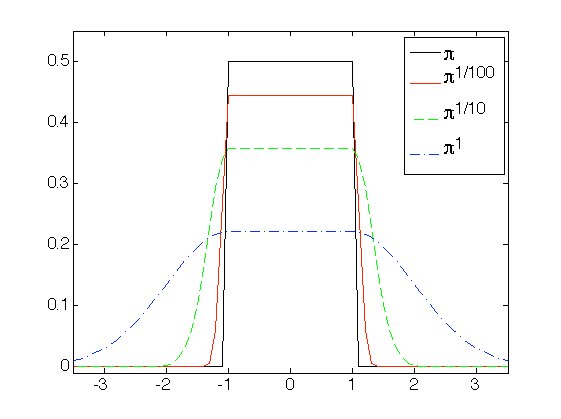}}
  \small{(b) $\pi(x) = \tfrac{1}{2}\indiK_{[-1,1]}(x)$}
\end{minipage}
\caption{\small{Density plots for the Laplace (a) and uniform (b) distributions (solid
black), and their smooth approximations $\pi^\lambda$ for $\lambda = 1, 0.1, 0.01$ (dashed blue and green, and solid red).}} \label{FigMoreauApprox}
\end{figure}

\begin{figure}[htbp!]
\begin{minipage}[l2]{0.5\linewidth}
  \centering
  \centerline{\includegraphics[width=7cm]{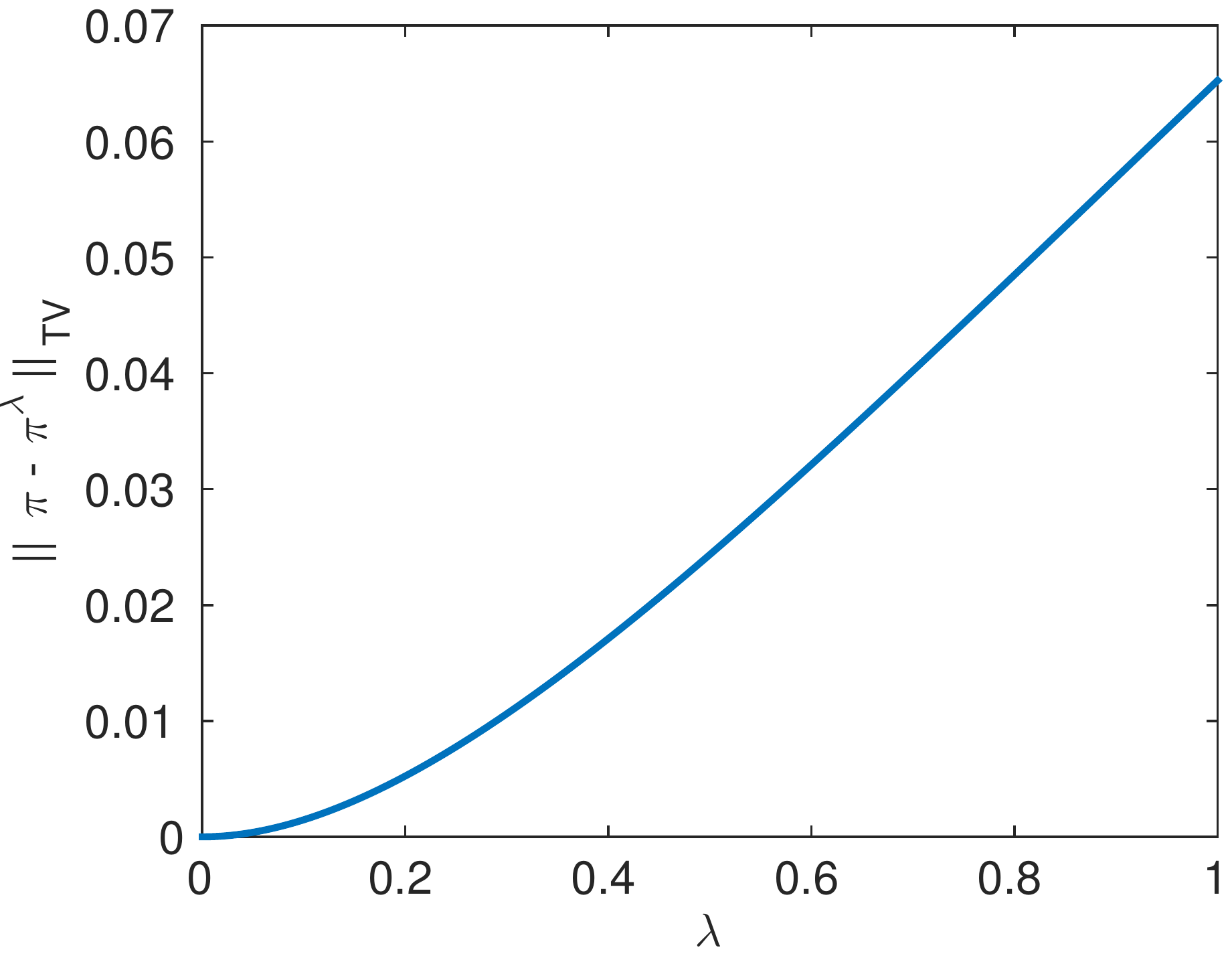}}
  \small{$\pi(x) = \tfrac{1}{2}\rme^{-|x|}$}
\end{minipage}
\begin{minipage}[l2]{0.5\linewidth}
  \centering
  \centerline{\includegraphics[width=7cm]{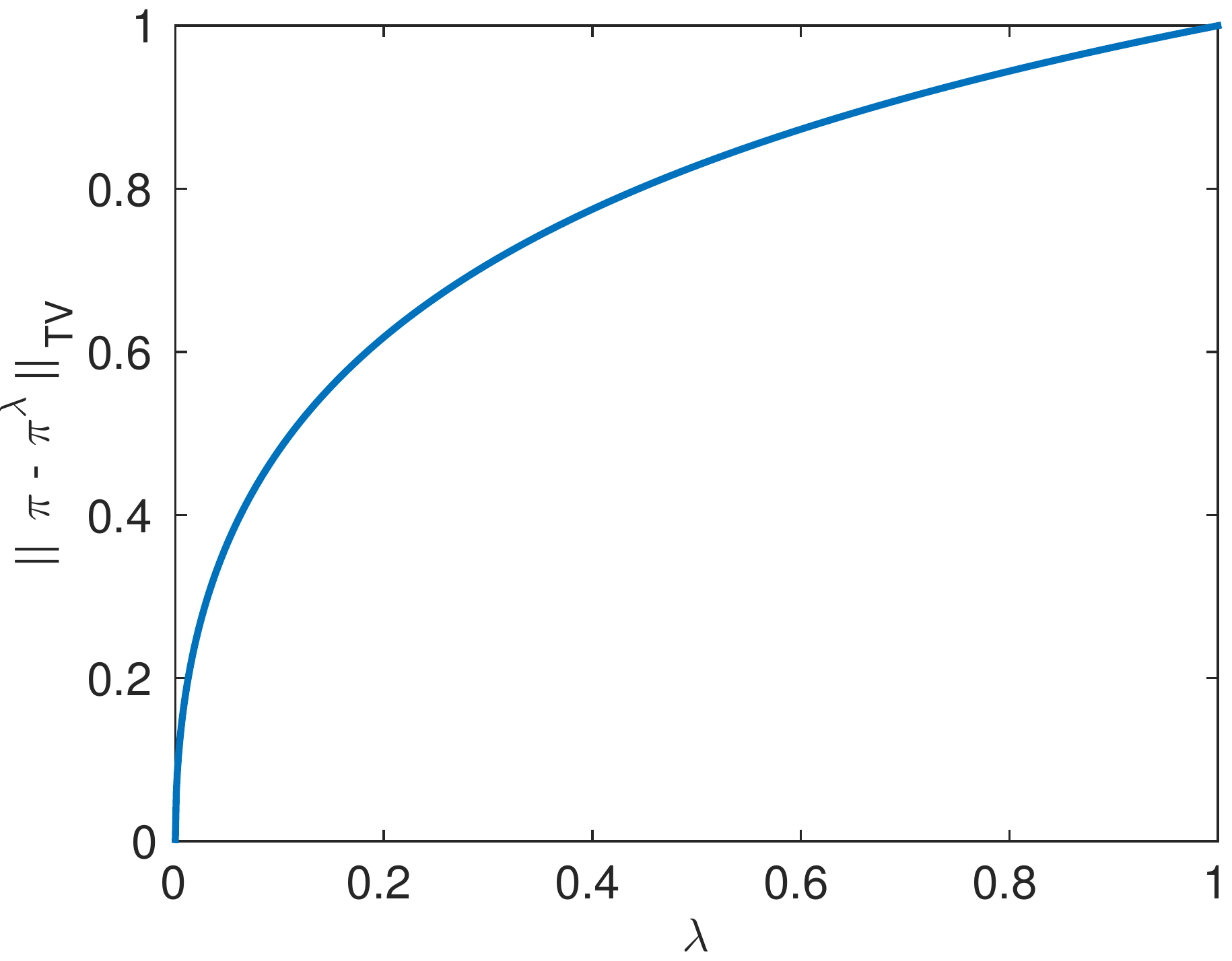}}
  \small{$\pi(x) = \tfrac{1}{2}\indiK_{[-1,1]}(x)$}
\end{minipage}
\caption{\small{Total variation norm between $\pi$ and its smooth approximation $\pil$ as function of $\lambda$.}} \label{FigMoreauApprox-TV}
\end{figure}

We now make two key observations. First, \Cref{propo:finite-measure-MY} shows that $\nabla U^{\lambdaMY}$ is
gradient Lipschitz and therefore it guarantees that the Langevin SDE
constructed with $U^{\lambdaMY}$ converges to $\pi^{\lambda}$ as $t \rightarrow \infty$ (formally, it guarantees that the Langevin SDE
associated with $\pi^{\lambda}$ admits a unique strong solution
$(\mathbf{X}^\lambda_t)_{t \geq 0}$ and $\pi^{\lambda}$ is the unique
stationary distribution of the semigroup). More importantly, as it will be seen below, it 
implies that the ULA chain derived from a Euler-Maruyama discretisation of
this Langevin diffusion will be, by construction, well
behaved and useful for Monte Carlo integration with respect to
$\pi^\lambda$. 

Second, \Cref{propo:finite-measure-MY} also establishes that $\lambda$ controls
the estimation bias involved in performing estimations with
$\pi^\lambda$ as a substitute of $\pi$. This approximation error can
be made arbitrarily small, and is bounded explicitly by $\lambdaMY
\norm[2][\Lip]{\gU}$ when $\gU$ is Lipschitz.

We are now in a position to present the new MCMC methodology proposed in this work, which is essentially an application of ULA to $\pi^{\lambda}$. Precisely, given $\lambdaMY>0$ and a  stepsize $\gaStep >0$, we use an Euler-Maruyama approximation of $(\mathbf{X}^\lambda_t)_{t \geq 0}$, and obtain the following Markov chain $(\XEM_k)_{k\geq 0}$: for all $k \geq 0$
\begin{equation}
  \label{eq:def-MYRULA}
  \textrm{MYULA}: \, \XEM_{k+1} = (1- \tfrac{\gaStep}{\lambda})\XEM_{k} - \gaStep \nabla \fU(\XEM_k)  + \tfrac{\gaStep}{\lambda}\proxgul(\XEM_k) +\sqrt{2 \gaStep} \ZE_{k+1} \eqsp,
\end{equation}
where $\sequence{\ZE}[k][\nset^*]$ is a sequence of \iid\ $d$ dimensional standard Gaussian random variables. This algorithm will be referred to as the \emph{Moreau-Yosida Unadjusted Langevin Algorithm} (MYULA), and is summarised in \Cref{Algo:MYULA} below (see \Cref{guidelines} for guidelines for setting the values of $\gaStep$ and $\lambda$). Note that the stationary distribution of the MYULA sequence $\sequence{\XEM}[k][\nset]$ is different from the  target distribution $\piml$, and depends on the stepsize $\gaStep >0$. Nevertheless, we show in \Cref{ssec:convergence_analysis} that, choosing $\lambda$ and $\gamma$ appropriately, the samples are very close to $\pi$.

Besides, to compute the expectation of a function $\frm : \rset^d \to
\rset$ under $\pi$ from $\{\XEM_k\ ; \ 0 \leq k \leq n\}$,  an
optional importance sampling step might be used to correct the regularization.
This step amounts to approximate $\int_{\rset^d} \frm(x) \pi(x) \rmd x$ by the weighted sum
\begin{equation}
  \label{eq:importance_sampling}
  \ImpSamp_n(\frm) = \sum_{k=0}^n \weight{k}\frm(X_k) \eqsp, \text{ with } \weight{k} = \defEns{\sum_{\ell=0}^n \rme^{\Gdiff^\lambdaMY(\XEM_\ell)}}^{-1} \rme^{\Gdiff^\lambdaMY(\XEM_k)} \eqsp,
\end{equation}
where for all $x \in \rset^d$
\begin{equation*}
  \Gdiff^\lambdaMY(x)=\gUl(x)-\gU(x)=\gU(\proxgul(x)) - \gU(x) +(2\lambdaMY)^{-1}\norm[2]{x- \proxgul(x)} \eqsp.
\end{equation*}

To remove this asymptotic bias, we can add an Hastings-Metropolis step, which will
produce a Markov chain $\sequence{\XEA}[k][\nset]$ which is reversible this time
with respect to $\piml$ and use similarly an importance sampling step
to correct for the bias introduced by smoothing. This algorithm will be called the \emph{Moreau-Yosida Regularized Metropolis-adjusted Langevin Algorithm} (MYMALA). 

The focus of this work is on MYULA without importance sampling or Metropolis-Hastings correction. A study of MYMALA  is currently in progress and will be reported separately.

\begin{algorithm}
\caption{Moreau-Yoshida unadjusted Langevin algorithm (MYULA)}
\label{Algo:MYULA}
\begin{algorithmic}
\STATE \textbf{set} $\XEM_{0} \in \mathbb{R}^d$, $\lambda > 0$, $\gaStep \in (0, \lambda/(\lambda L_f + 1)]$, $n \in \mathbb{N}$
\FOR {$k = 0:n$}
\STATE $\ZE_{k+1} \sim \mathcal{N}(0,\mathbb{I}_d)$
\STATE $\XEM_{k+1} = (1- \tfrac{\gaStep}{\lambda})\XEM_{k} - \gaStep \nabla \fU(\XEM_k)  + \tfrac{\gaStep}{\lambda}\proxgul(\XEM_k) +\sqrt{2 \gaStep} \ZE_{k+1}$
\ENDFOR
\end{algorithmic}
\end{algorithm}

\subsection{Theoretical convergence analysis of MYULA}\label{ssec:convergence_analysis}
In this section we present a detailed theoretical analysis of MYULA implemented with fixed regularization parameter $\lambdaMY>0$ and step-size $\gaStep >0$. We first establish that the chains generated by MYULA converge geometrically fast to an approximation of $\pi$ that is controlled by $\lambda$ and $\gaStep$, and which can be made arbitrarily close to $\pi$. More importantly, we also establish non-asymptotic bounds for the estimation error of MYULA with a finite number of iterations. This enables an analysis of the behaviour of MYULA as the dimensionality of the model increases, as well as deriving practical guidelines for setting $\lambda$ and $\gamma$ for specific models.

First, under \Cref{assum:form-potential}, it has been observed that
$\gUl$ is $\lambdaMY^{-1}$-gradient Lipschitz, which implies that
$\Ul$ is gradient Lipschitz as well: there exists $\LUl \geq 0$ such that for
all $x,y \in \rset^d$, $\norm{\nabla \Ul (x) - \nabla \Ul(y)} \leq
\LUl\norm{x-y}$ and
\begin{equation}
  \label{eq:definition_const_lip_u_lambda}
   \LUl \leq  \LLf + \lambdaMY^{-1} \eqsp.
\end{equation}
Of course, this bound strongly depends on the decomposition of $U$ in a smooth and a non-smooth part, which is arbitrary and therefore
can be pessimistic (for instance, if $U$ is continuously differentiable, $\gU$ can be chosen to be $0$ which implies $\Ul = U$
and $\LUl = \LL$).

We assume first the following assumption on the potential $\Ul$.
\begin{assumption}
  \label{assum:potentialUl}
There exist a minimizer $\xstar$  of $\Ul$, $\rhoUl >0$ and $\RUl \geq 0$ such that  for all $x \in \rset^d$, $\norm{x-\xstar} \geq \RUl$,
\begin{equation}
\label{eq:superexpo_potential}
\Ul(x) - \Ul(\xstar) \geq \rhoUl \norm{x-\xstar} \eqsp.
\end{equation}
\end{assumption}
Note that in fact
\Cref{assum:potentialUl} always holds under \Cref{assum:form-potential} and \Cref{assum:integrabilite}, 
since by \Cref{lem:control-fun-convex-gene} and
\Cref{propo:finite-measure-MY} there exist $C_1,C_2 >0$ such that
$\Ul(x) \geq C_1\norm{x} -C_2$. Therefore, since $\Ul$ is continuous on $\rset^d$, there exists a minimizer $\xstar$ of $\Ul$ and 
\eqref{eq:superexpo_potential} holds with $\rhoUl \leftarrow C_1/2$ and
$\RUl \leftarrow 2(C_2 + \norm{\xstar}+ \Ul(\xstar))/C_1$. However, these
constants are non quantitative, and that is why we introduce \Cref{assum:potentialUl} to derive quantitative bounds.

Consider the Markov kernel $\RKer_{\gaStep}$ associated to the Euler-Maruyama discretization \eqref{eq:def-MYRULA}  given, for all $\Abor \in \B(\rset^d)$ and $x \in \rset^d$ by
\begin{equation}
  \label{eq:definition_R_kernel}
  \RKer_{\gaStep}(x ,\Abor) = (4 \uppi \gaStep)^{-d/2}\int_{\Abor}\exp \parenthese{- (4 \gaStep)^{-1}\norm[2]{y-x+\gaStep \nabla \Ul(x)}} \rmd y \eqsp.
\end{equation}
The sequence $(\XEM_n)_{n \geq 0}$ defined by \eqref{eq:def-MYRULA}  is a homogeneous Markov chain
associated with the Markov kernel $\RKer_{\gaStep}$.

It is easily seen that under \Cref{assum:form-potential}, since $\Ul$ is continuously differentiable, $\RKer_{\gaStep}$
is irreducible with
 respect to the Lebesgue measure,  all compact sets are $1$-small and the kernel is strongly aperiodic.  In addition under \Cref{assum:potentialUl}, since $U$ is also convex then
 \cite[Proposition 13]{durmus:moulines:2015} shows that $\RKer_{\gaStep}$ satisfies a Foster-Lyapunov drift condition, \ie~for all $\bar{\gaStep} \in \ocint{0,L}$, $\gaStep \in \ocint{0,\bar{\gaStep}}$ and for all $x \in \rset^d$,
 \begin{equation*}
   \RKer_{\gaStep} \VSIAM(x) \leq \lambdaSIAM^{\gaStep} \VSIAM(x) + \bSIAM \gaStep \eqsp,
 \end{equation*}
where 
\begin{subequations}
\begin{align}
\VSIAM(x)& = \exp\defEns{(\rhoUl/4)\parenthese{\norm[2]{x-\xstar}+1}^{1/2}} \\
  \lambdaSIAM &= \rme^{-2^{-4} \rhoUl^2(2^{1/2}-1)} \eqsp, \ \KOne = \max(1,2 d / \rhoUl,\RUl) \\
\bSIAM &= \defEns{ (\rhoUl/4)(d+(\rhoUl \bargaStep/4))  - \log( \lambdaSIAM )}  \rme^{\rhoUl (\KOne^2+1)^{1/2}/4 + (\rhoUl \bargaStep/4)(d+(\rhoUl \bargaStep/4))} 
 \eqsp.
\end{align}
\label{eq:convex_drift}
\end{subequations}
By \cite[Theorem 16.0.1]{meyn:tweedie:2009}, $\RKer_{\gaStep}$ has a
unique invariant distribution $\pilgaStep$ and is $\VSIAM$-uniformly
geometrically ergodic: there exists $\kappaSIAM_{\convSym} \in \ooint{0,1}$ and $C_{\convSym}
\geq 0$ such that all $n \geq 0$ and $x \in \rset^d$, 
\begin{equation*}
  \tvnorm{\delta_x \RKer_{\gaStep}^n - \pilgaStep} \leq C_{\convSym} \VSIAM(x) \kappaSIAM_{\convSym}^n \eqsp.
\end{equation*}
Note $\pilgaStep$ is different from $\pil$, nevertheless the following
result shows that choosing $\gaStep$ small enough, the ULA generates
samples very close to the distribution $\pil$.
 
We are now ready to present our main theoretical result: a non-asymptotic bound of the total-variation distance between $\pi$ and the marginal laws of the samples generated by MYULA. Denote in the following by $\Fsmall : \rset_+ \to \rset_+$ the function given for all $r \geq 0$ by
\begin{equation}
  \label{eq:Fsmall}
  \Fsmall(r) = r^{2}/ \defEns{2\bfPhi^{-1}(3/4)}^{2} \eqsp.
\end{equation}
 \begin{theorem}[\protect{\cite[Corollary 19]{durmus:moulines:2015}}]
\label{theo:convergence_TV_dec-stepsize-convV}
Assume \Cref{assum:form-potential} and \Cref{assum:potentialUl}. Let $\bargaStep \in \ocint{0,\LUl^{-1}}$. For all $\varepsilon >0$ and $x \in \rset^d$,  we have 
$$
\tvnorm{\delta_x  \RKer_{\gaStep}^n-\pi} \leq \varepsilon \eqsp,
$$ 
provided that  $n > T \gaStep^{-1}$ with
\begin{align*}
T &= \max\defEns{32\, \rhoUl^{-2}\log\parenthese{8 \varepsilon^{-1}A_1(x)}, \log(16 \varepsilon^{-1}) \Big/(- \log(\kappa))} \\
 \gaStep &\leq \frac{-d+\sqrt{d^2 +(2/3) A_2(x) \varepsilon^2 (\LUl^2T)^{-1} }}{2 A_2(x)/3} \wedge \bargaStep  \eqsp,
\end{align*}
where $\KOneD = \max(1,4 d / \rhoUl,\RUl) $
\begin{align*}
 \bOne &=(\rhoUl/4)\parentheseDeux{\rhoUl \KOneD /4 +d}
\max \defEns{1,(\KOneD^2 +1)^{-1/2} \exp(\rhoUl( \KOneD^2+1)^{1/2}/4)} \\
A_1(x) &=  (1/2)( \VSIAM(x) + \bSIAM(-\lambdaSIAM^{\gaStep} \log(\lambdaSIAM))^{-1} + 8 \rhoUl^{-2} \bOne)+ 16 \rhoUl^{-2} \bOne\rme^{32^{-1} \rhoUl^{2} \Fsmall\defEns{(8/\rhoUl) \log( 32 \rhoUl^{-2} \bOne)}} \\
  A_2(x) & = L^2\parenthese{4 \rhoUl^{-1} \parentheseDeux{ 1+\log \defEns{\VSIAM(x) + \bSIAM(-\lambdaSIAM^{\gaStep} \log(\lambdaSIAM))^{-1}}}}^2\\
\log(\kappa) 
&= -  \log(2) (\rhoUl^{2}/32) \parentheseDeux{\log \defEns{8 \rhoUl^{-2} \bOne \parenthese{3+ 4 \rhoUl^{-2}\rme^{32^{-1} \rhoUl^{2} \Fsmall\defEns{(8/\rhoUl) \log( 32 \rhoUl^{-2} \bOne)}}}} +\log(2) }^{-1}   \eqsp, \\
\end{align*}
$ \KOne,\lambdaSIAM,\bSIAM,\VSIAM$ are defined in \eqref{eq:convex_drift} and $\Fsmall$ in \eqref{eq:Fsmall}.
\end{theorem}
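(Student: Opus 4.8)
The statement is the specialisation to the smoothed potential $\Ul = \fU + \gUl$ of the general non-asymptotic analysis of ULA in \cite{durmus:moulines:2015}, so the plan is to verify that $\Ul$ meets the hypotheses of that theory and then to assemble the bound from its two structural ingredients. First I would record the three properties of $\Ul$ that drive the argument: it is convex, being the sum of the convex $\fU$ and the convex Moreau--Yosida envelope $\gUl$; its gradient is Lipschitz with constant $\LUl \leq \LLf + \lambdaMY^{-1}$ by \eqref{eq:definition_const_lip_u_lambda}; and it obeys the super-exponential tail growth \eqref{eq:superexpo_potential} of \Cref{assum:potentialUl}. These are exactly the inputs of \cite[Proposition~13]{durmus:moulines:2015}, already quoted above, which supplies the Foster--Lyapunov drift $\RKer_\gaStep \VSIAM \leq \lambdaSIAM^{\gaStep} \VSIAM + \bSIAM \gaStep$ with the explicit constants of \eqref{eq:convex_drift} for the exponential Lyapunov function $\VSIAM$. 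The relevant target is the stationary law of the diffusion driven by $\nabla \Ul$, namely $\pil$; the cited corollary controls $\tvnorm{\delta_x \RKer_\gaStep^n - \pil}$, and it is in this sense (or combined with the smoothing estimate of \Cref{propo:finite-measure-MY}) that the displayed bound is to be read.

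Next I would decompose the error, via the triangle inequality, into a \emph{mixing} part and a \emph{discretization} part,
\begin{equation*}
  \tvnorm{\delta_x \RKer_\gaStep^n - \pil} \leq \tvnorm{\delta_x \RKer_\gaStep^n - \delta_x \PLang^{\lambdaMY}_{n\gaStep}} + \tvnorm{\delta_x \PLang^{\lambdaMY}_{n\gaStep} - \pil} \eqsp,
\end{equation*}
where $\PLang^{\lambdaMY}_t$ is the semigroup of the Langevin diffusion $(\mathbf{X}^{\lambdaMY}_t)_{t \geq 0}$ associated with $\Ul$. The second, purely continuous term decays exponentially: the same super-exponential growth \eqref{eq:superexpo_potential}, now read at the level of the generator together with the smallness of compact sets, yields $\tvnorm{\delta_x \PLang^{\lambdaMY}_t - \pil} \leq C \VSIAM(x) \rme^{-c t}$ with $c$ of order $\rhoUl^2/32$, so matching it against $\varepsilon/2$ forces the first branch $32 \rhoUl^{-2}\log(8\varepsilon^{-1}A_1(x))$ of $T$, the factor $A_1(x)$ collecting the Lyapunov prefactor $\VSIAM(x)$ through \eqref{eq:convex_drift}. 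The complementary branch $\log(16\varepsilon^{-1})/(-\log\kappa)$ comes from the geometric ergodicity of the \emph{discrete} kernel $\RKer_\gaStep$ toward its own invariant law, which the drift condition and \cite[Theorem~16.0.1]{meyn:tweedie:2009} deliver with rate $\kappa$; taking $T$ to be the larger of the two horizons guarantees that the stationary regime is reached to accuracy $\varepsilon/2$.

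The heart of the matter is the discretization term. I would control it by a Girsanov change of measure between the law of the Euler path and that of the diffusion over the horizon $n\gaStep$, which turns the accumulated one-step weak error into an integrated drift discrepancy; Pinsker's inequality, bounding total variation by the square root of the Kullback--Leibler divergence, then transfers this to total variation and explains why the step-size budget is measured in $\varepsilon^2$ rather than $\varepsilon$. Concretely, the accumulated Kullback--Leibler error over $n = T\gaStep^{-1}$ steps is of the shape $\LUl^2 T\,(d\,\gaStep + (A_2(x)/3)\gaStep^2)$, the dimension $d$ entering through the Gaussian increments and $A_2(x)$ through the Lyapunov moments of the chain. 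Requiring this to be at most $\varepsilon^2/2$ is a quadratic inequality in $\gaStep$ whose admissible root is precisely
\begin{equation*}
  \gaStep \leq \frac{-d+\sqrt{d^2 +(2/3) A_2(x) \varepsilon^2 (\LUl^2 T)^{-1}}}{2 A_2(x)/3} \wedge \bargaStep \eqsp.
\end{equation*}

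The main obstacle is this last estimate: producing a fully explicit, dimension-dependent control of the accumulated Euler weak error that is uniform along the trajectory requires propagating the exponential Lyapunov moments of $\VSIAM$ through the Girsanov bound, and then splitting the total budget $\varepsilon$ cleanly between the mixing and the discretization parts so that $A_1(x)$, $A_2(x)$ and $\kappa$ emerge in closed form. Everything else --- verifying the hypotheses, quoting the drift condition of \eqref{eq:convex_drift}, and invoking geometric ergodicity --- is routine given the results already recorded in the excerpt.
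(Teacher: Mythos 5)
Your plan is correct and follows essentially the same route as the paper, whose proof simply combines \cite[Lemma 4, Theorem 14, Theorem 16]{durmus:moulines:2015}: the triangle-inequality split into a Girsanov--Pinsker discretization term (yielding the quadratic inequality in $\gaStep$ with budget $\varepsilon^2/(2\LUl^2 T)$) and a geometric-mixing term driven by the Foster--Lyapunov drift \eqref{eq:convex_drift} and minorization is exactly the machinery behind those cited results. You also correctly read the target as $\pil$ (the invariant law of the diffusion for $\Ul$), which is the sense in which the cited corollary applies.
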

\begin{proof}
The proof follows from combining  \cite[Lemma 4, Theorem 14, Theorem 16]{durmus:moulines:2015}.
\end{proof}
This result implies that the number of iteration to reach a precision target $\varepsilon$ is, at worse, of order $d^5\log^2(\varepsilon^{-1})\varepsilon^{-2}$ for this class of models. Significantly more precise bounds can be obtained under more stringent assumption on $\Ul$. In particular, we consider the case where $\Ul$ is strongly convex outside some ball; see \cite{eberle:2015}.
\begin{assumption}
  \label{assum:strongConvexityOutsideBallDriftV}
  There exist $\RSt \geq 1$ and $\constStV >0$, such that for all
  $x,y \in \rset^d$, $\norm{x-y} \geq \RStV$,
  \[
  \ps{\nabla \Ul(x) -\nabla \Ul(y)} {x-y} \geq  \constStV
  \norm[2]{x-y} \eqsp.
  \]
\end{assumption}
Of course, in the case where $\fU$ is strongly convex then this assumption holds.
\begin{theorem}[\protect{\cite[Lemma 4, Theorem 21]{durmus:moulines:2015}}]
\label{theo:convergence_TV_dec-stepsize-StV}
Assume  \Cref{assum:form-potential} and \Cref{assum:strongConvexityOutsideBallDriftV}. 
Let $\bargaStep \in \ocint{0,\LUl^{-1}}$. Then for all $\varepsilon >0$, we get $\tvnorm{\delta_x  \RKer_{\gaStep}^n-\pi} \leq \varepsilon$ provided that  $n > T \gaStep^{-1}$ with
\begin{align*}
T &= \parenthese{ \log\{ A_1(x) \}-\log(\varepsilon/2)} \Big/(- \log(\kappa))\\
 \gaStep &\leq \frac{-d+\sqrt{d^2 +(2/3) A_2(x) \varepsilon^2 (\LUl^2T)^{-1} }}{2 A_2(x)/3} \wedge \bargaStep  \eqsp,
\end{align*}
where
\begin{align*}
A_1(x) &= 5+ \parenthese{d/\constStV + \RStV^2}^{1/2} +(A_1(x)/L^2)^{1/2} \\
  A_2(x) & = L^2 \parenthese{\norm[2]{x-\xstar} + 2(d + \constStV  \RSt^2)(\rme^{-\gaStep(2\constSt+\bargaStep \LUl^2)}/(2\constSt+\bargaStep \LUl^2) )^{-1}}\\
 \log(\kappa) 
 &= -  (\log(2) \constSt/2) \parentheseDeux{\log \defEns{ \parenthese{1+ \rme^{\constSt \Fsmall\defEns{\max(1,\RStV)}/4}}\parenthese{1+\max(1,\RStV)}} +\log(2) }^{-1}  
 \eqsp,
\end{align*}
and $\Fsmall$ is given in \eqref{eq:Fsmall}.
\end{theorem}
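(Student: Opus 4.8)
The plan is to reconstruct the two quantitative ingredients behind \cite[Lemma 4, Theorem 21]{durmus:moulines:2015}: a geometric ergodicity estimate for the Euler--Maruyama kernel $\RKer_\gaStep$ associated with the smoothed potential $\Ul$, together with a control of the discretisation bias, now exploiting the one-sided Lipschitz condition \Cref{assum:strongConvexityOutsideBallDriftV} in place of the mere growth bound that drove \Cref{theo:convergence_TV_dec-stepsize-convV}. Throughout, the two structural constants are $L$, the global Lipschitz constant of $\nabla \Ul$ (with $L \le \LL + \lambdaMY^{-1}$ by \Cref{propo:finite-measure-MY}), and the pair $(\RStV,\constStV)$ of \Cref{assum:strongConvexityOutsideBallDriftV}, which gives $\ps{\nabla \Ul(x) -\nabla \Ul(y)}{x-y} \ge \constStV \norm[2]{x-y}$ once $\norm{x-y} \ge \RStV$.

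First I would build an explicit coupling of two copies of the chain \eqref{eq:def-MYRULA}. Far apart, where \Cref{assum:strongConvexityOutsideBallDriftV} applies, a synchronous coupling (shared Gaussian increments) contracts the gap: expanding $\norm[2]{\XEM_{k+1}-Y_{k+1}}$ and using dissipativity together with $\gaStep \le L^{-1}$ yields a strict per-step contraction of $\norm{\XEM_k - Y_k}$. Inside the ball $\{\norm{x-y}\le \RStV\}$ dissipativity is lost, so there I would switch to a reflection coupling of the increments à la \cite{eberle:2015}, which drives the one-dimensional gap process to hit zero with a quantified probability; the Gaussian reflection computation is exactly where the constant $\bfPhi^{-1}(3/4)$, hence $\Fsmall$ in \eqref{eq:Fsmall}, enters. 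Gluing the two regimes through a concave, Lyapunov-weighted distance produces a coalescence time with geometric tail, hence $\tvnorm{\delta_x \RKer_\gaStep^n - \pilgaStep} \le A_1(x)\kappa^{n\gaStep}$, the contraction exponent $\log(\kappa)$ being governed by $\constSt$ and by the factor $(1+\rme^{\constSt \Fsmall\defEns{\max(1,\RStV)}/4})(1+\max(1,\RStV))$, with Lemma~4 of the reference used to make the prefactor $A_1(x)$ explicit.

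Second, I would bound the distance between the invariant law $\pilgaStep$ of $\RKer_\gaStep$ and the continuous target $\pil$. Running the same coupling against the exact diffusion and estimating the mean-square one-step Euler defect, which is of order $L^2\gaStep$ per unit time because $\nabla \Ul$ is $L$-Lipschitz, then summing along the contraction, turns this defect into a stationary bias whose size is packaged by $A_2(x)$ (note its $L^2$ prefactor and its dependence on $\norm[2]{x-\xstar}$ and on $d+\constStV \RSt^2$). Combining the two estimates gives a bound of the form $\tvnorm{\delta_x \RKer_\gaStep^n - \pil} \le A_1(x)\kappa^{n\gaStep} + \text{bias}(\gaStep)$. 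Imposing the first term $\le \varepsilon/2$ forces $n\gaStep > T$ with $T = (\log A_1(x) - \log(\varepsilon/2))/(-\log\kappa)$, and imposing $\text{bias}(\gaStep)\le\varepsilon/2$ amounts to a quadratic inequality in $\gaStep$ whose solution is the stated constraint $\gaStep \le \tfrac{-d+\sqrt{d^2+(2/3)A_2(x)\varepsilon^2/(L^2T)}}{2A_2(x)/3}\wedge\bargaStep$, the restriction $\bargaStep\le L^{-1}$ ensuring stability of the scheme. Should the target be read as $\pi$ rather than $\pil$, one adds $\tvnorm{\pil-\pi}$, controlled by \Cref{propo:finite-measure-MY} on taking $\lambdaMY$ small.

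The main obstacle is the first step: extracting an explicit, dimension-robust contraction rate from strong convexity that holds only outside a ball. Synchronous coupling contracts outside the ball but is inert inside it, so one genuinely needs the reflection coupling and a carefully tuned concave metric to force coalescence on the small set; making that rate quantitative rather than merely positive is the delicate analytic heart of the argument, and it is precisely what Eberle's construction, imported through Theorem~21 of the reference, supplies.
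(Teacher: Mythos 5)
You should first be aware that the paper contains no internal proof of this statement: the theorem is imported verbatim from \cite[Lemma 4, Theorem 21]{durmus:moulines:2015}, applied to the Euler kernel $\RKer_{\gaStep}$ of the smoothed potential $\Ul$ (legitimate because $\nabla \Ul$ is $\LUl$-Lipschitz and \Cref{assum:strongConvexityOutsideBallDriftV} is assumed for $\Ul$), with ``$\pi$'' in the bound to be read as $\pil$ --- a subtlety you correctly flagged. Your first block is also sound in spirit: coupling arguments of Eberle type do underlie the contraction rate, and the quantile $\bfPhi^{-1}(3/4)$ inside $\Fsmall$ is indeed of reflection-coupling/Gaussian-overlap origin.

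The genuine gap is in your second block, and it is visible in the statement itself. You decompose through the invariant law of the discrete kernel (ergodicity of $\RKer_{\gaStep}$ towards $\pilgaStep$, plus a stationary bias $\tvnorm{\pilgaStep-\pil}$ estimated by mean-square one-step Euler defects), and this fails on two counts. First, a mean-square, shared-noise coupling estimate controls a Wasserstein distance, not total variation: under any coupling in which the chain and the diffusion share the driving noise, the two trajectories never exactly coincide, so the coupling inequality for TV gives nothing, and your plan contains no Wasserstein-to-TV conversion (smoothing/regularisation step or Girsanov argument). Second, a stationary bias $\tvnorm{\pilgaStep-\pil}$ is a property of the kernel alone and cannot depend on the horizon $T$, yet the stated step-size restriction does depend on $T$: it is algebraically equivalent to $\LUl^{2} T \gaStep \parenthese{A_2(x)\gaStep/3 + d} \le \varepsilon^{2}/2$, i.e.\ a Kullback--Leibler-type quantity accumulated over a trajectory of length $T$, compared with $(\varepsilon/2)^{2}$. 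That is the fingerprint of the route actually taken in the cited proof: write $\tvnorm{\delta_x \RKer_{\gaStep}^{n}-\pil} \le \tvnorm{\delta_x \RKer_{\gaStep}^{n-p}\RKer_{\gaStep}^{p}-\delta_x \RKer_{\gaStep}^{n-p}\PSDE_{p\gaStep}} + \tvnorm{\delta_x \RKer_{\gaStep}^{n-p}\PSDE_{p\gaStep}-\pil}$ with $p\gaStep \approx T$, where $(\PSDE_t)_{t \geq 0}$ is the semigroup of the Langevin diffusion for $\Ul$; bound the first term by Girsanov's theorem between the path law of the interpolated Euler scheme and that of the diffusion followed by Pinsker's inequality (this is where $A_2(x)$, a bound on $\LUl^{2}\sup_k \mathbb{E}_x[\norm[2]{\XEM_k-\xstar}]$ supplied by discrete drift inequalities, enters); and bound the second term by the quantitative TV ergodicity of the \emph{continuous} dynamics, not of the discrete chain. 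So the contraction analysis belongs to the diffusion and the discretisation is handled in KL rather than in mean square; as described, your route cannot produce the stated constants, and its bias step would fail in total variation.
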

This result implies that the worst minimal number of iterations to achieve a
precision level $\varepsilon >0$ is this time of order $d \log(d) \log^{2}(\varepsilon^{-1})\varepsilon^{-2}$.

\subsection{Selection of $\lambdaMY$ and $\gaStep$}\label{guidelines}
We now discuss practical guidelines for setting the values for $\lambdaMY$ and for $\gaStep$. As mentioned previously, our aim is to provide an efficient computation methodology that can be applied straightforwardly to any model satisfying \Cref{assum:form-potential}. Hence, rather than seeking optimal values for specific models, we focus on general rules that are simple, robust, and which only involve tractable quantities such as Lipschitz constants. 

First, by Theorem \ref{theo:convergence_TV_dec-stepsize-convV}, $\gaStep$ should take its value in the range $\gaStep \in (0, \lambda/(\LL\lambda + 1)]$ to guarantee the stability of the Euler-Maruyama discretisation, and where we recall that $\LL$ is the Lipschitz constant of $\nabla f$. The values of $\gaStep$ within this range are subject to the a bias-variance trade-off. Precisely, large values of $\gaStep$ produce a fast-moving chain that convergences quickly and has low estimation variance, but potentially relatively high asymptotic bias. Conversely, small values of $\gaStep$ lead to low asymptotic bias, but produce a Markov chain that moves slowly and requires a large number of iterations to produce a stable estimate (such chains often also suffer from some additional bias from the transient or burn-in period). Because applications in imaging sciences involve high dimensionality and require moderately low computing times, as a general rule we recommend setting $\gaStep$ to a relatively large value. For example, in our experiments we use 
$$
\gaStep \in \left[\lambda/5(\LL\lambda + 1), \lambda/2(\LL\lambda + 1)\right]\,.
$$
Observe that this range depends on the value of $\lambda$, which is also subject to a bias-variance tradeoff. Letting $\lambda \rightarrow 0$ to bring $\pi^\lambda$ close to $\pi$ reduces asymptotic bias, but forces $\gaStep \rightarrow 0$ and consequently reduces significantly the efficiency of the chain. Conversely, increasing the value of $\lambda$ accelerates the chain at the expense of some asymptotic bias. Based on our experience, and again with an emphasis on efficiency in high dimensional settings, we recommend using values of $\lambda$ in the order of $\LL^{-1}$ (there is no benefit in using larger values of $\lambda$ because $\gaStep$ saturates at $\LL^{-1}$). In all our experiments we use $\lambda = 1/\LL$ and $\gaStep \in [\LL^{-1}/10, \LL^{-1}/4]$ and obtain estimation errors of the order of $1\%$.

\subsection{Connections to the proximal Metropolis-adjusted Langevin algorithm}
We conclude this section with a discussion of the connections between the proposed MYULA method and the original proximal Metropolis-adjusted Langevin algorithm (Px-MALA) \cite{Pereyra2015}. That algorithm is also based on a Euler-Maruyama approximation of a Langevin SDE targeting a Moreau-Yoshide-type regularised approximation of $\pi$. However, unlike MYULA, that algorithm uses this approximation as proposal mechanism to drive a Metropolis-Hastings (MH) algorithm targeting $\pi$ (not the regularised approximation). The role of the MH is two-fold: it removes the asymptotic bias related to the approximations involved, and it provides a theoretical framework for Px-MALA by placing the scheme within the framework of MH algorithms (recall that many theoretical results regarding ULAs are very recent). However, as mentioned previously, the introduction of the MH step often slows down the algorithm, thus leading to higher estimation variance and longer chains (and potentially some bias from the chain's initial transient regime). Of course, it also introduces a significant computational overhead related to the computation of the MH acceptance ratio \cite{Pereyra2015}. Another importance difference between MYULA and Px-MALA is that the latter uses the proximal operator of $U$, which is often unavailable and has to be approximated by using a forward-backward scheme based on the decomposition $U = f + g$ that we also use in this paper. This approximation error is corrected in practice by the MH step, but it is not considered in the theoretical analysis of the algorithm. Conversely, in MYULA this decomposition is explicit, both in the computational aspects of the method as well as in its theoretical analysis. Furthermore, the theory for MYULA presented in this paper is significantly more complete than that currently available for Px-MALA and other MALAs. Finally, MYULA is also more robust and simple to implement than Px-MALA. For example, identifying suitable values of $\gaStep$ for MYULA is straightforward by using the guidelines described above, whereas setting $\gaStep$ for Px-MALA can be challenging and often requires using an adaptive MCMC approach based on a stochastic approximation scheme \cite{Pereyra2015,Green2015}.



\section{Experimental results}
\label{sec:experiments}
In this section we illustrate the proposed methodology with four canonical imaging inverse problems related to image deconvolution and tomographic reconstruction with total-variation and $\ell_1$ sparse priors. In the Bayesian setting these problems are typically solved by MAP estimation, which delivers accurate solutions and can be computed very efficiently by using proximal convex optimisation algorithm. Here we demonstrate MYULA by performing some advanced and challenging Bayesian analyses that are beyond the scope of optimisation-based mathematical imaging methodologies. For example, in Section \ref{exp:BMS} we report two experiments where we use MYULA to perform Bayesian model choice for image deconvolution models, and where a novelty is that comparisons are performed intrinsically (i.e., without ground truth available) by computing the posterior probability of each model given the observed data. Following on from this, in Section \ref{exp:BUQ} we report the two additional experiments where we use MYULA to explore the posterior uncertainty about $x$ and analyse specific aspects about the solutions delivered, particularly by computing simultaneous credible sets (joint Bayesian confidence sets).

Moreover, to assess the computational efficiency and the accuracy of MYULA we benchmark our estimations against the results of Px-MALA \cite{Pereyra2015} targeting the exact posterior $\pi(x) = p(x|y)$ (recall that this algorithm has no asymptotic estimation bias). We emphasise at this point that we do not seek to compare explicitly and quantitatively the methods because: 1) MYULA and Px-MALA do not target the exact same stationary distribution; 2) high-dimensional quantitative efficiency comparisons may depend strongly on the summary statistics used to define the efficiency metrics; and 3) results can often be marginally improved by fine tuning the algorithm parameters (e.g., step sizes, burn-in periods, etc.). What our comparisons seek to demonstrate is that MYULA can deliver reliable approximate inferences with a computational cost that is often significantly lower than Px-MALA, and more importantly, that it provides a general, robust, and theoretically sound computational framework for performing advanced Bayesian analyses for imaging problems. Experiments were conducted on a Apple Macbook Pro computer running MATLAB 2015. 

\subsection{Bayesian model selection}\label{exp:BMS}
\subsubsection{Bayesian analysis and computation}
Most mathematical imaging problems can be solved with a range of alternative models. Currently, the predominant approach to select the best model for a specific problem is to compare their estimations against ground truth. For example, given $K$ alternative Bayesian models $\mathcal{M}_1, \ldots, \mathcal{M}_K$, practitioners often benchmark models by artificially degrading a set of test images, computing the MAP estimator for each model and image, and then measuring estimation error with respect to the truth. The model with the best overall performance is then used in applications to analyse real data. Of course this approach to model selection has some limitations: 1) it relies strongly on test data that may not be representative of the unknown, and 2) conclusions can depend on the estimation error metrics used.

An advantage of formulating inverse problems within the Bayesian framework is that, in addition to strategies to perform point estimation, this formalism also provides theory to compare models objectively and intrinsically, and hence perform model selection in the absence of ground truth. Precisely, $K$ alternative Bayesian models are compared through their marginal posterior probabilities
\begin{equation}\label{margPost}
p(\mathcal{M}_j | y) = \frac{ p(y|\mathcal{M}_j) K^{-1}}{\sum_{k = 1}^K p(y | \mathcal{M}_k) K^{-1}},\quad j = \{1, \ldots, K\}\, ,
\end{equation}
where for objectiveness here we use an uniform prior on the auxiliary variable $j$ indexing the models, $p(y | \mathcal{M}_j)$ is the marginal likelihood
\begin{equation}\label{margLike}
p(y | \mathcal{M}_j) = \int p(x,y | \mathcal{M}_j) \textrm{d}x,\quad j = \{1, \ldots, K\}\, ,
\end{equation}
measuring model-fit-to-data and $p(y,x | \mathcal{M}_j)$ is the joint
probability density associated with $\mathcal{M}_j$ (see
\Cref{sec:selection_model_case-proper-imp_prior} for details regarding the case of
improper priors). Following Bayesian decision theory, to perform model
selection we simply chose the model with the highest posterior
probability (this is equivalent to performing MAP estimation on the
model index $j$):
\begin{equation*}
\mathcal{M}^* = \argmax_{j \in \{1,\ldots,K\}} p(\mathcal{M}_j | y).
\end{equation*}

From a computation viewpoint, performing Bayesian model selection for imaging problems is challenging because it requires evaluating the likelihoods $p(y | \mathcal{M}_j)$ up to a proportionality constant, or equivalently the Bayes factors $p(y | \mathcal{M}_j)/p(y | \mathcal{M}_i)$ for $i,j \in \{1,\cdots,K\}$ (see \Cref{sec:case-proper-imp_prior} for details regarding the case of improper priors). Here we perform this computation by Monte Carlo integration. Precisely, given $n$ samples $X^M_1,\ldots, X^M_n$ from $p(x |y, \mathcal{M}_j)$, we approximate the marginal likelihood of model $\mathcal{M}_j$ by using the truncated harmonic mean estimator \citep{Robert2009AIP}
 \begin{equation}\label{harmonicEstimator}
 p(y|\mathcal{M}_j) \approx \left(\sum_{k = 1}^n \frac{\indi{\Setharm^{\star}}{(X^M_k)}}{p(X^M_k,y | \mathcal{M}_j)}\right)^{-1} \vol(\Setharm^{\star}) \eqsp, \quad j = \{1,2,3\}
 \end{equation}
where for all $x,y$, $p(x,y| \mathcal{M}_j)$ is joint density of $\mathcal{M}_j$ and $\Setharm^{\star} = \cup_{j = 1}^3 \hpd_{j,\alpha}$ is the union of highest posterior density regions \eqref{HPD} of each model at level $(1-\alpha)$ (see Section \ref{exp:BUQ} for details about HPD regions). In our experiments we use the samples to calibrate each $\hpd_{j,\alpha}$ for $\alpha = 0.8$. Notice that it is not necessary to compute $\vol(\Setharm^{\star})$ to calculate \eqref{margPost} because the normalisation is retrieved via $ \sum_{j=1}^3 p(\mathcal{M}_j|y) =1$. See \Cref{HME} for more details about this estimator and its use to compute the Bayes factors.

\subsubsection{Experiment 1: Image deconvolution with total-variation prior}\label{ssec:exp1}
\paragraph{Experiment setup}
To illustrate the Bayesian model selection approach we consider an image deconvolution problem with three alternative models related to three different blur operators. The goal of image deconvolution is to recover a high-resolution image $x \in \mathbb{R}^n$ from a blurred and noisy observation $y = H x + w$, where $H$ is a circulant blurring matrix and $w \sim \mathcal{N}(0,\sigma^2\boldsymbol{I}_n)$. This inverse problem is ill-conditioned, a difficulty that Bayesian image deconvolution methods address by exploiting the prior knowledge available. For this first experiment we consider three alternative models involving three different blur operators $H_1$, $H_2$, and $H_3$. With regards to the prior, we use the popular total-variation prior that promotes regularity by using the pseudo-norm $TV(x) = \|\nabla_d x\|_{1-2}$, where $\|\cdot\|_{1-2}$ is the composite $\ell_1 -\ell_2$ norm and $\nabla_d$ is the two-dimensional discrete gradient operator. The posterior distribution $p(x|y)$ for the models is given by 
\begin{eqnarray}\label{deconvolutionTV}
\mathcal{M}_j : \quad \pi(x) \propto \exp{\left[-(\|y-H_j x\|^2/2\sigma^2) - \beta TV(x) \right]}
\end{eqnarray}
with fixed hyper-parameters $\sigma > 0$ and $\beta > 0$ set manually by an expert. This density is log-concave and MAP estimation can be performed efficiently by proximal convex optimisation (here we use the ADMM algorithm SALSA \citep{Figueiredo2011}).

Figure \ref{FibBoat1} presents an experiment with the \texttt{Boat} test image of size $d = 256 \times 256$ pixels. Figure \ref{FibBoat1}(a) shows a blurred and noisy observation $y$, generated by using a $5 \times 5$ uniform blur and Gaussian noise with $\sigma = 0.47$, related to a blurred signal-to-noise ratio of $40$dB. Moreover, Figures \ref{FibBoat1}(b)-(d) show the MAP estimates associated with three alternative instances of model \eqref{deconvolutionTV} involving the following blur operators:
\begin{itemize}
\item $\mathcal{M}_1$: $H_1$ is the correct $5 \times 5$ uniform blur operator.
\item $\mathcal{M}_2$: $H_2$ is a mildly misspecified $6 \times 6$ uniform blur operator. 
\item $\mathcal{M}_3$: $H_3$ is a strongly misspecified $7 \times 7$ uniform blur operator.
\end{itemize}
(All models share the same hyper-parameter values $\sigma = 0.47$ and $\beta = 0.03$ selected manually to produce good image deconvolution results.)  We observe in Figure \ref{FibBoat1} that models $\mathcal{M}_1$ and $\mathcal{M}_2$ have produced sharp images with fine detail, whereas $\mathcal{M}_3$ is clearly misspecified. In terms of estimation performance with respect to the truth, as expected the estimate of Figure \ref{FibBoat1}(c) corresponding to model $\mathcal{M}_1$ achieves the highest peak signal-to-noise-ratio (PSNR) of $33.8$dB,  $\mathcal{M}_2$ scores $33.4$dB, and $\mathcal{M}_3$ scores $13.4$dB. Finally, computing the MAP estimates displayed in Figure \ref{FibBoat1} with SALSA \cite{Figueiredo2011} required $2$ seconds per model.

\begin{figure}
\begin{minipage}[l2]{0.49\linewidth}
  \centering
  \centerline{\includegraphics[width=7.5cm]{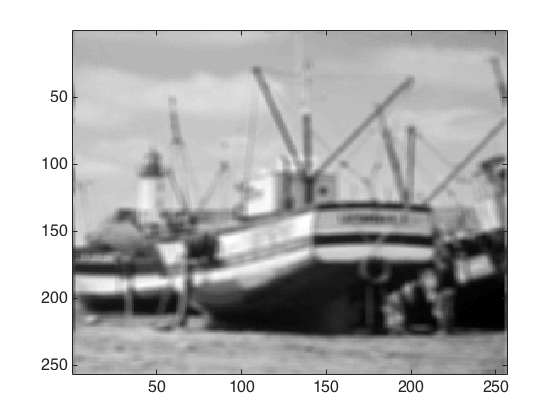}}
  \small{(a)}
\end{minipage}
\begin{minipage}[l2]{0.49\linewidth}
  \centering
  \centerline{\includegraphics[width=7.5cm]{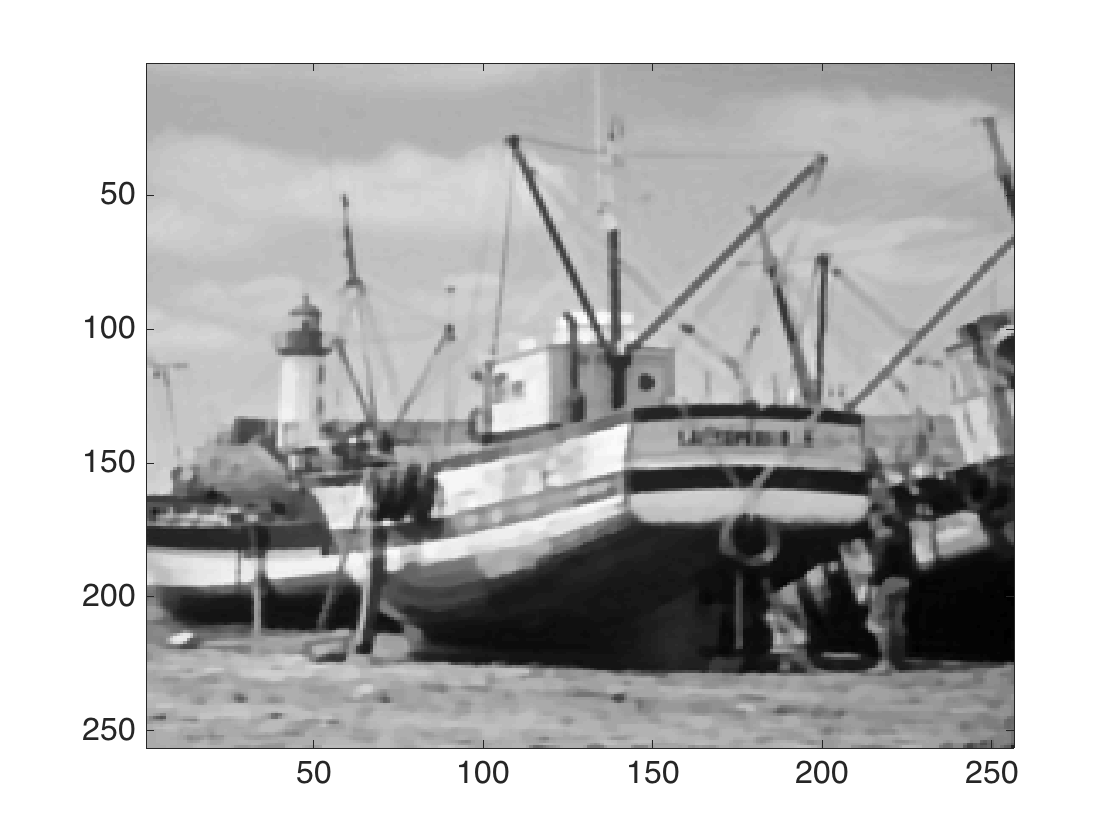}}
  \small{(b)}
\end{minipage}
\begin{minipage}[l2]{0.49\linewidth}
  \centering
  \centerline{\includegraphics[width=7.5cm]{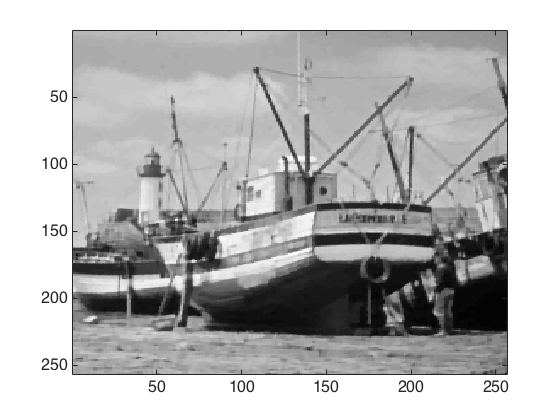}}
  \small{(c)}
\end{minipage}
\begin{minipage}[l2]{0.49\linewidth}
  \centering
  \centerline{\includegraphics[width=7.5cm]{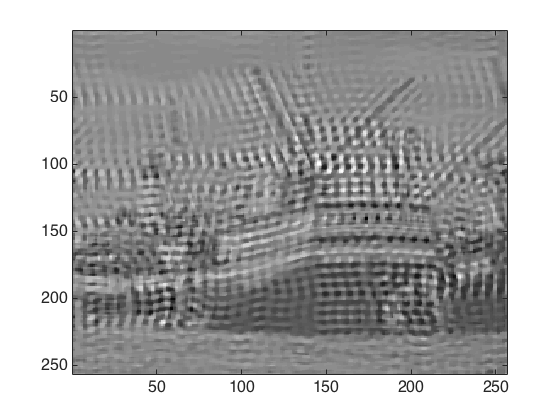}}
  \small{(d)}
\end{minipage}
\caption{\small{Deconvolution experiment - \texttt{Boat} test image ($256\times 256$ pixels): (a) Blurred and noisy image $y$, (b)-(d) MAP estimators corresponding to models $\mathcal{M}_1$, $\mathcal{M}_2$, and $\mathcal{M}_3$.}} \label{FibBoat1}
\end{figure}

\paragraph{Model selection in the absence of ground truth}
We now demonstrate the Bayesian approach to perform model selection intrinsically. Precisely, we ran $10^5$ iterations of  MYULA  with the specific blur operators corresponding to $\mathcal{M}_1$, $\mathcal{M}_2$, and $\mathcal{M}_3$. For this experiment we implemented MYULA with $f(x) = \|y-H_j x\|^2/2\sigma^2$ and $g(x) = \beta TV(x)$, with fixed algorithm parameters $\lambda = L_f^{-1} = 0.45$ and $\gamma = L_f^{-1}/5 = 0.1$, and by using Chambolle's algorithm \citep{Chambolle} to evaluate the proximal operator of the TV-norm. Computing these samples required approximately $30$ minutes per model. Following on from this, we used the samples to calibrate the high-posterior-density regions $\hpd_j$ of each model at level $20\%$, and then computed the Bayes factors between the models by using \eqref{harmonicEstimator} (see \ref{sec:case-proper-prior} for details).

By applying this procedure we obtained that $\mathcal{M}_1$ has the highest posterior probability $p(\mathcal{M}_1|y) = 0.964$, followed by $p(\mathcal{M}_2|y) = 0.036$ and $p(\mathcal{M}_3|y) < 0.001$ (the values of the Bayes factors for this experiment are $\hat{B}_{1,2}(y) = 26.8$ and $\hat{B}_{1,3}(y) > 10^{3}$). These results, which have been computing without using any form of ground truth, are in agreement with the PSNR values calculated by using the true image and provide strong evidence in favour of model $\mathcal{M}_1$. They also confirm the good performance of the Bayesian model selection technique.

\paragraph{Comparison with proximal MALA}
We conclude this first experiment by benchmarking our estimations against Px-MALA, which targets \eqref{deconvolutionTV} exactly. Precisely, we recalculated the models' posterior probabilities \eqref{margPost} with Px-MALA and obtained that $p( \mathcal{M}_1 | y) = 0.962$, $p(  \mathcal{M}_2 | y) = 0.038$, and $p( \mathcal{M}_3 | y) < 0.001$, indicating that the MYULA estimate has an approximation error of the order of $0.5\%$ (to obtain accurate estimates for Px-MALA we used $n = 10^7$ iterations with an adaptive time-step targeting an average acceptance rate of order $45\%$). Moreover, comparing the chains generated with MYULA and Px-MALA revealed that MYULA is significantly more computationally efficient than Px-MALA. For illustration, Fig. \ref{FibBoat2}(a) shows the transient regimes of the MYULA and Px-MALA chains related $\mathcal{M}_1$, where starting from a common initial condition the chains converge to the posterior typical set\footnote{In stationarity, $x|y$ is with very high probability in the neighbourhood of the $(d-1)$-dimensional shell $\{x : U(x) = \mathbb{E}[U(x)|y]\}$, see \cite{Pereyra:2016b}} of $p(x|y)$ (to improve visibility this is displayed in logarithmic scale). Observe that MYULA requires around $10^2$ iterations to navigate the parameter space and reach the typical set, whereas Px-MALA requires $10^4$ iterations. Furthermore, to compare the efficiency of the chains in stationarity, Fig. \ref{FibBoat2}(b) shows the autocorrelation function of the chains generated by MYULA and Px-MALA. To highlight the efficiency of MYULA we have used the chains' slowest component (i.e., that with largest variance) as summary statistic. Again, observe that MYULA is clearly significantly more efficient than Px-MALA. From a practitioner's viewpoint, this efficiency advantage is further accentuated by the fact that MYULA iterations are almost twice less computationally expensive than Px-MALA iterations, which include the MH step.

\begin{figure}
\begin{minipage}[l2]{0.5\linewidth}
  \centering
 \centerline{\includegraphics[width=7.5cm]{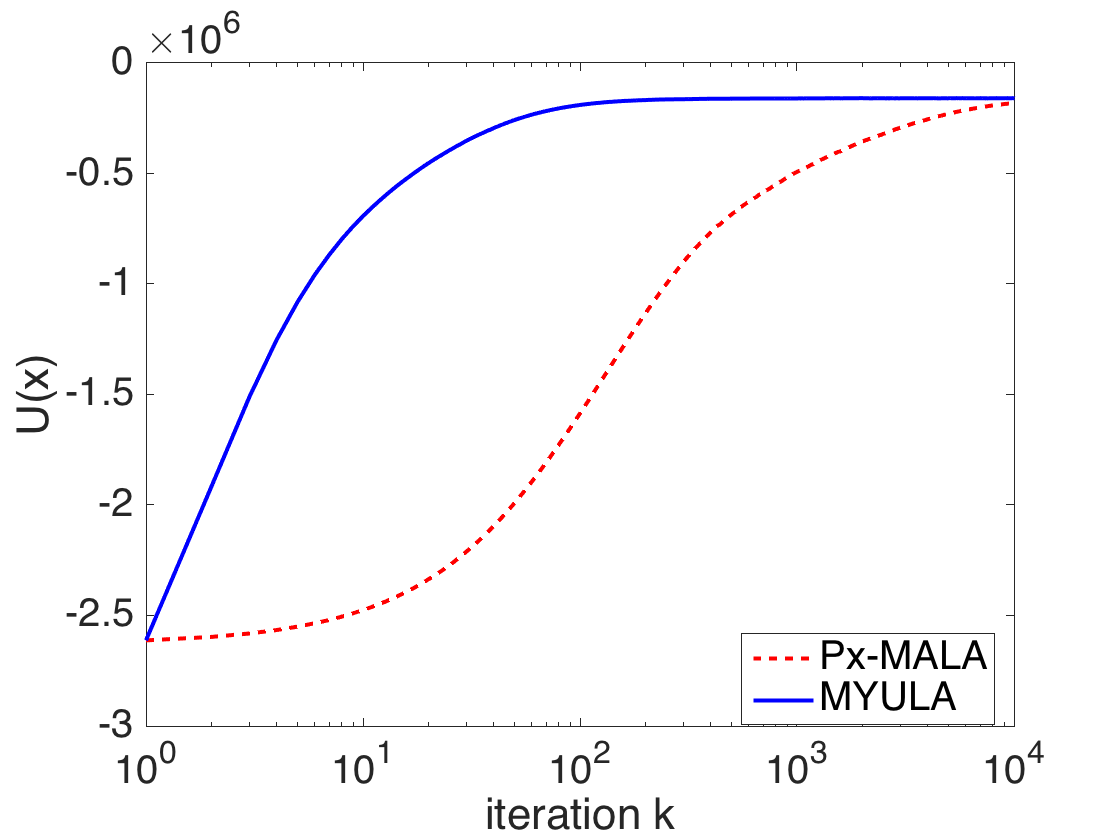}}
  \small{(a)}
\end{minipage}
\begin{minipage}[l2]{0.5\linewidth}
  \centering
  \centerline{\includegraphics[width=7.5cm]{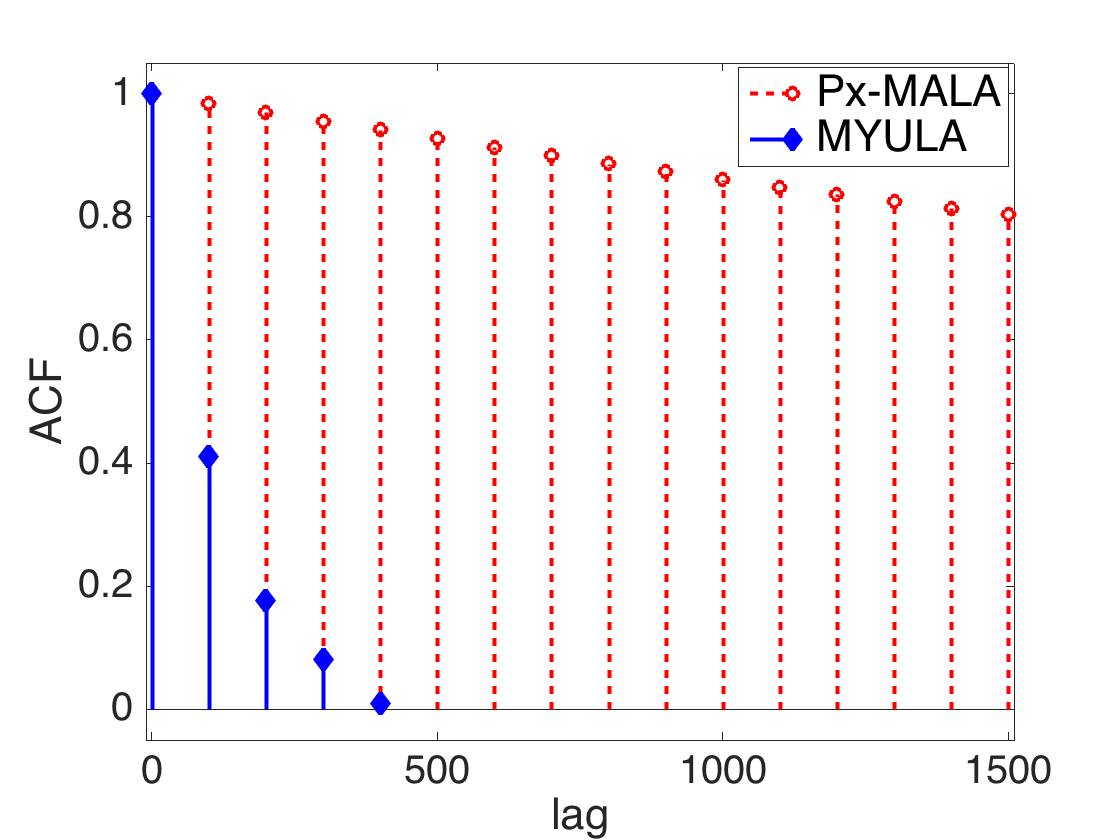}}
  \small{(b)}
\end{minipage}
\caption{\small{MYULA and Px-MALA comparison: (a) Convergence of the chains to the typical set of \eqref{deconvolutionTV} under model $\mathcal{M}_1$ (logarithmic scale), (b) chain autocorrelation function (ACF).}} \label{FibBoat2}
\end{figure}

\subsubsection{Experiment 2: Image deconvolution with wavelet frame}
\label{sec:experiment-2:-image}
\paragraph{Experiment setup}
The second model selection experiment we consider involves three alternative image deconvolution models with different priors. This experiment is more challenging than the previous one because priors operate indirectly on $y$ through $x$. We consider three models of the form
\begin{eqnarray}\label{deconvolutionL1wave}
\mathcal{M}_j: \quad p(x|y) \propto \exp{\left[-(\|y-H x\|^2/2\sigma^2) - \beta_j \|\Psi_j x\|_1 \right]}
\end{eqnarray}
where $\Psi_j$ is a model dependent frame:
\begin{itemize}
\item $\mathcal{M}_1$: $\Psi_1$ is a redundant Haar frame with 6-level, and $\beta_1 = 0.02$ is selected automatically by using a hierarchical Bayesian method \citep{Pereyra_EUSIPCO_2015},
\item $\mathcal{M}_2$: $\Psi_2$ is a redundant Haar frame with 3-level, and $\beta_2 = 0.02$ is selected automatically by using a hierarchical Bayesian method \citep{Pereyra_EUSIPCO_2015},
\item $\mathcal{M}_3$: $\Psi_3$ is a redundant Haar frame with 3-level, and $\beta_3 = 0.003$ is selected automatically by using the L-curve method \citep{Hanke1993}.
\end{itemize}
To make the selection problem even more challenging, in this experiment we use a higher noise level $\sigma = 1.76$, related to a blurred signal-to-noise ratio of $30$dB. We note that \eqref{deconvolutionL1wave} is log-concave and MAP estimation can be performed efficiently by proximal convex optimisation (here we use the ADMM algorithm SALSA \cite{Figueiredo2011}).

Fig. \ref{FibFlin} presents an experiment with the \texttt{Flinstones} test image of size $d = 256 \times 256$ pixels. Fig. \ref{FibBoat1}(a) shows the blurred and noisy observation $y$ used in this experiment, which we generated by using a $5 \times 5$ uniform blur and $\sigma = 1.76$, and Fig. \ref{FibFlin}(b)-(d) show the MAP estimates obtained with $\mathcal{M}_1$, $\mathcal{M}_2$, and $\mathcal{M}_3$ by using SALSA \citep{Figueiredo2011} (these computations required $4$ seconds per model). We observe in Figure \ref{FibBoat1} that models $\mathcal{M}_1$ and $\mathcal{M}_2$ have produced sharp images with fine detail, whereas $\mathcal{M}_3$ is misspecified. In terms of estimation performance with respect to the truth, the estimate of Figure \ref{FibFlin}(c) corresponding to model $\mathcal{M}_2$ achieves the highest peak signal-to-noise-ratio (PSNR) of $20.8$dB,  $\mathcal{M}_1$ scores $20.6$dB, and $\mathcal{M}_3$ scores $11.6$dB.

\begin{figure}
\begin{minipage}[l2]{0.49\linewidth}
  \centering
  \centerline{\includegraphics[width=7.5cm]{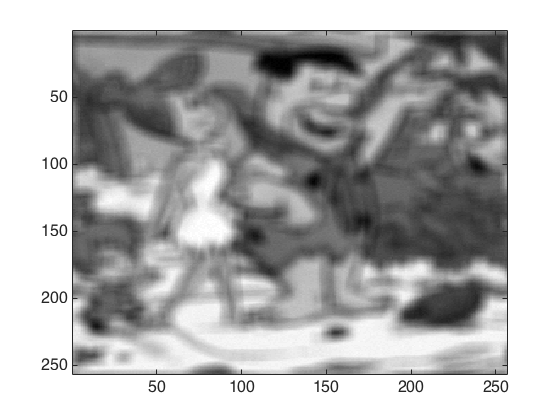}}
  \small{(a)}
\end{minipage}
\begin{minipage}[l2]{0.49\linewidth}
  \centering
  \centerline{\includegraphics[width=7.5cm]{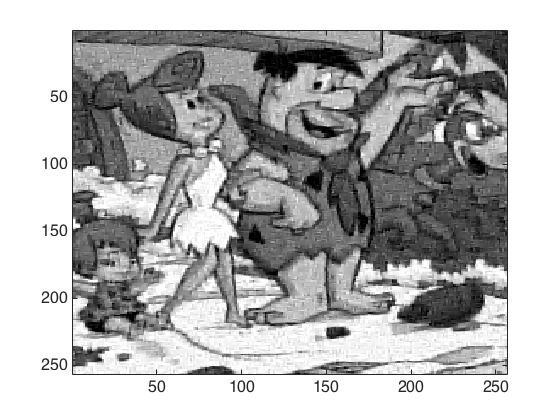}}
  \small{(b)}
\end{minipage}
\begin{minipage}[l2]{0.49\linewidth}
  \centering
  \centerline{\includegraphics[width=7.5cm]{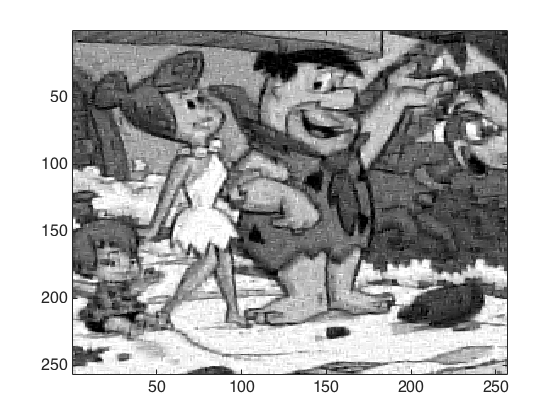}}
  \small{(c)}
\end{minipage}
\begin{minipage}[l2]{0.49\linewidth}
  \centering
  \centerline{\includegraphics[width=7.5cm]{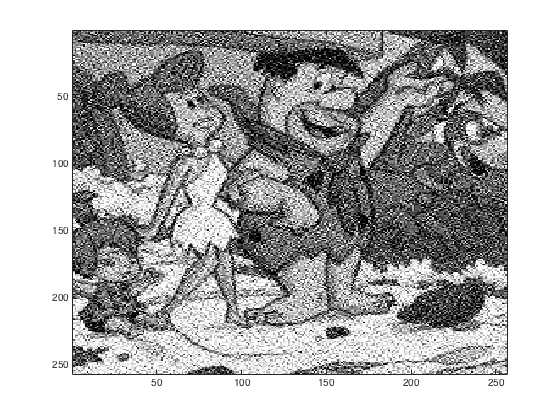}}
  \small{(d)}
\end{minipage}
\caption{\small{Deconvolution experiment - \texttt{Flinstones} test image ($256\times 256$ pixels): (a) Blurred and noisy image $y$, (b)-(d) MAP estimators corresponding to models $\mathcal{M}_1$, $\mathcal{M}_2$, and $\mathcal{M}_3$.}} \label{FibFlin}
\end{figure}

\paragraph{Model selection in the absence of ground truth}
Similarly to the previous experiment, we used MYULA to perform Bayesian model selection intrinsically. Precisely, we used MYULA to generate three sets of $n = 10^5$ samples $X^M_1,\ldots, X^M_n$ approximately distributed according to \eqref{deconvolutionL1wave} with the parameters corresponding to $\mathcal{M}_1$, $\mathcal{M}_2$, and $\mathcal{M}_3$. For this experiment we implemented MYULA with $f(x) = \|y-H x\|^2/2\sigma^2$ and $g(x) = \beta_j\|\Psi_j x\|_1$, with fixed algorithm parameters $\lambda = L_f^{-1} = 4.5$  and $\gamma = L_f^{-1}/5 = 0.9$. Computing these samples required $50$ minutes per model. Following on from this, we used the samples to calibrate the high-posterior-density regions $\hpd_j$ of each model at level $20\%$, and then computed the Bayes factors between the models by using \eqref{harmonicEstimator} (see \ref{sec:case-proper-prior} for details).

By applying this procedure we obtained that $\mathcal{M}_2$ has the highest posterior probability $p(\mathcal{M}_2|y) = 0.42$, followed by $p(\mathcal{M}_1|y) = 0.32$ and $p(\mathcal{M}_3|y) = 0.26$ (the values of the Bayes factors for this experiment are $\hat{B}_{2,1}(y) = 1.31$ and $\hat{B}_{2,3}(y) = 1.62$). Note that these results, which have been computing without using any form of ground truth, are in agreement with the PSNR values calculated by using the true image and indicate that $\mathcal{M}_2$ is the most appropriate model for data $y$.

\paragraph{Comparison with proximal MALA}
Again, we conclude our second experiment by benchmarking our estimations against Px-MALA, which targets \eqref{deconvolutionL1wave} exactly. Precisely, we recalculated the models' posterior probabilities \eqref{margPost} with Px-MALA and obtained that $p(y | \mathcal{M}_1) = 0.41$, $p(y | \mathcal{M}_2) = 0.33$, and $p(y | \mathcal{M}_3)= 0.26$, indicating that the MYULA estimate has an approximation error of the order of $0.5\%$ (to obtain accurate estimates for Px-MALA we used $n = 10^7$ iterations with an adaptive time-step targeting an average acceptance rate of order $45\%$). Moreover, efficiency analyses indicate that in this case MYULA is approximately an order of magnitude more efficient per iteration than Px-MALA, with an additional advantage in terms of time-normalised computational efficiency because of a lower computational cost per iteration. 

\subsection{Bayesian uncertainty quantification via posterior credible sets}\label{exp:BUQ}
\subsubsection{Bayesian analysis and computation}
As mentioned earlier, point estimators such as $\hat{x}_{MAP}$ deliver accurate results but do not provide information about the posterior uncertainty of $x$. Given the uncertainty that is inherent to ill-posed and ill-conditioned inverse problems, it would be highly desirable to complement point estimators with posterior credibility sets that indicate the region of the parameter space where most of the posterior probability mass of $x$ lies. This is formalised in the Bayesian decision theory framework by computing \emph{credible regions} \cite{cprbayes}. A set $C_\alpha$ is a posterior credible region with confidence level $(1-\alpha)$ if
\begin{equation*}
\mathbb{P} \left[x \in \hpdz_{\alpha} | y \right] = 1-\alpha.
\end{equation*}
It is easy to check that for any $\alpha \in (0,1)$ there are infinitely many regions of the parameter space that verify this property. Among all possible regions, the so-called \emph{highest posterior density} (HPD) region has minimum volume \cite{cprbayes}, and is given by 
\begin{eqnarray}\label{HPD}
\hpd_{\alpha} = \{ \bx : U(x) \leq \eta_\alpha \}
\end{eqnarray}
with $\eta_\alpha \in \mathbb{R}$ chosen such that $\int_{\hpd_{\alpha}} p(x|y) \textrm{d}\bx = 1-\alpha$ holds. This joint credible set has the important advantage that it can be enumerated by simply specifying the scalar value $\eta_\alpha$.

From a computation viewpoint, calculating credible sets for images is very challenging because it requires solving very high-dimensional integrals of the form $\int_{\hpd_{\alpha}} p(x|y)\textrm{d}x$. In this work, we use MYULA to approximate these integrals.

\subsubsection{Experiment 3: Tomographic image reconstruction}\label{tomographic_imaging}
\paragraph{Experiment setup}
The third experiment we consider is a tomographic image reconstruction problem with a total-variation prior. The goal is to recover the image $x \in \mathbb{R}^n$ from an incomplete and noisy set of Fourier measurements $y = A F x + w$, where $F$ is the discrete Fourier transform operator, $A$ is a tomographic sampling mask, and $w \sim \mathcal{N}(0,\sigma^2\boldsymbol{I}_n)$. This inverse problem is ill-posed, resulting in significant uncertainty about the true value of $x$. Similarly to Experiment 1, in this experiment we regularise the problem and reduce the uncertainty about $x$ by using a total-variation prior promoting piecewise regular images. The resulting posterior $p(x|y)$ is
\begin{eqnarray}\label{tomographic}
\pi(x) \propto \exp{\left[-\|y-A F x\|^2/2\sigma^2 -\beta TV(x)\right]}.
\end{eqnarray}
with fixed hyper-parameters $\sigma > 0$ and $\beta > 0$ set manually by an expert. We note that this density is log-concave and MAP estimation can be performed efficiently by proximal convex optimisation (here we use the ADMM algorithm SALSA \citep{Figueiredo2011}).

Figure \ref{FigMRI1} presents an experiment with the \texttt{Shepp-Logan phantom} magnetic resonance image (MRI) of size $d = 128 \times 128$ pixels presented in Figure \ref{FigMRI1}(a). Figure \ref{FigMRI1}(b) shows a noisy tomographic measurement $y$ of this image, contaminated with Gaussian noise with $\sigma = 7 \times 10^{-2}$ (to improve visibility Figure \ref{FigMRI1}(b) shows the amplitude of the Fourier coefficients in logarithmic scale, with black regions representing unobserved coefficients). Notice from Figure \ref{FigMRI1}(b) that only $15\%$ of the original Fourier coefficients are observed. Moreover, Figure \ref{FigMRI1}(c) shows the Bayesian estimate $\hat{x}_{MAP}$ associated with \eqref{tomographic} with hyper-parameter value $\beta = 5$. 

\begin{figure}[htbp!]
\begin{minipage}[l2]{0.49\linewidth}
  \centering
  \centerline{\includegraphics[width=7.5cm]{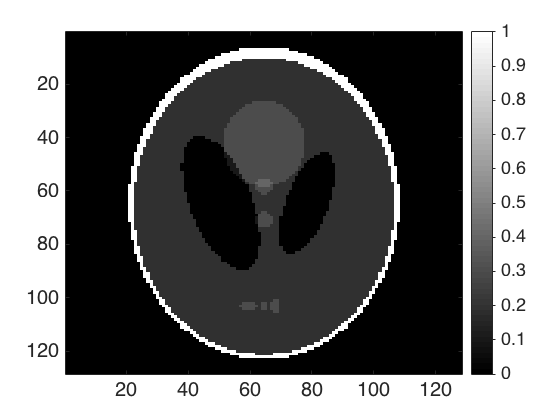}}
  \small{(a)}
\end{minipage}
\begin{minipage}[l2]{0.49\linewidth}
  \centering
  \centerline{\includegraphics[width=7.5cm]{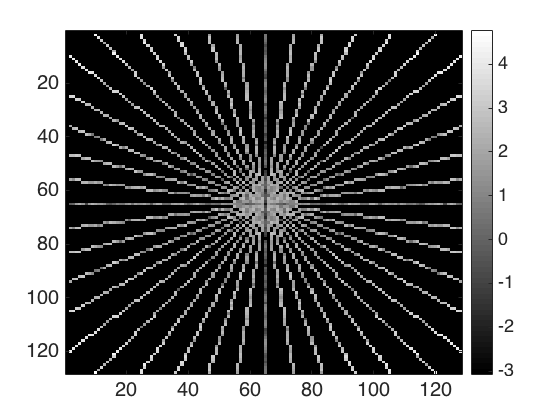}}
  \small{(b)}
\end{minipage}
\begin{minipage}[l2]{0.49\linewidth}
  \centering
  \centerline{\includegraphics[width=7.5cm]{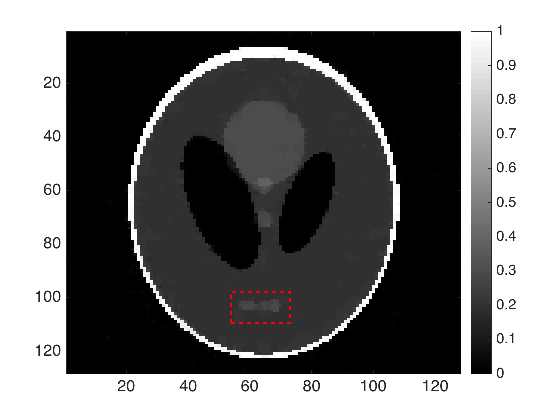}}
  \small{(d)}
\end{minipage}
\begin{minipage}[l2]{0.49\linewidth}
  \centering
  \centerline{\includegraphics[width=7.5cm]{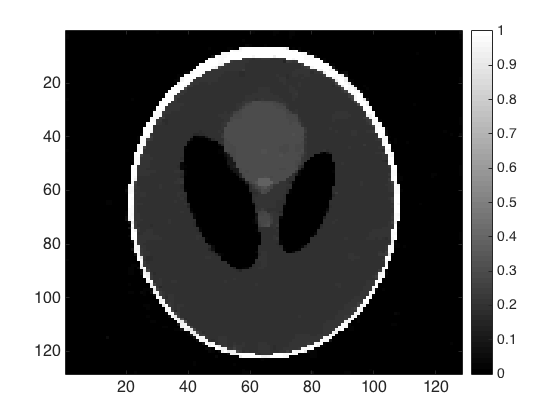}}
  \small{(d)}
\end{minipage}
\caption{\small{Tomography experiment: (a) \texttt{Shepp-Logan phantom} image ($128\times128$ pixels), (b) tomographic observation $y$ (amplitude of Fourier coefficients in logarithmic scale), (c) MAP estimator.}} \label{FigMRI1}
\end{figure}

\paragraph{Bayesian uncertainty analysis}
We now conduct a simple Bayesian uncertainty analysis to illustrate how posterior credible sets can inform decision-making. For illustration, suppose that the structure highlighted in red in Figure \ref{FigMRI1}(c) is relevant from a clinical viewpoint because it provides important information for diagnosis or treatment related decision-making. Also, suppose that we first observe this structure in the Bayesian estimate $\hat{x}_{MAP}$ and that, following on from this, we wish to explore the posterior uncertainty about $x$ to learn more about the structure. In particular, here we conduct a simple analysis to show that there is lack of confidence regarding the presence of this structure in the true image (i.e., the structure could be an artefact). Precisely, this is achieved by computing the HDP credible region $\hpd_{\alpha}$ and showing that it includes solutions that are essentially equivalent to $\hat{x}_{MAP}$ except for the fact that they do not have the structure of interest.

As alternative solution or ``counter example'' of $\hat{x}_{MAP}$, consider the image $x_\dagger$ displayed in Figure \ref{FigMRI1}(d). This image is equivalent to $\hat{x}_{MAP}$ except for the fact that the structure of interest has been removed (we generated this image by modifying $\hat{x}_{MAP}$ by applying a segmentation-inpainting process to replace the structure with the surrounding intensity level). Of course, clinicians observing $x_\dagger$ images would potentially arrive to significantly different conclusions about the diagnosis or the treatment required. This test image scores $U(x_\dagger) = 1.27 \times 10^4$.

To determine if $x_\dagger$ belongs to $\hpd_{\alpha}$ we used MYULA to generate $n = 10^5$ samples from \eqref{tomographic}, and calculated the HPD threshold $\eta_\alpha$ by estimating the $(1-\alpha)$-quantile of $U(x)$ (we implemented the algorithm with $f(x) = \|y-A F x\|^2/2\sigma^2$ and $g(x) = \beta TV(x)$, with fixed parameters $\lambda = L_f^{-1} = 1 \times 10^{-4}$ and $\gamma_k = L_f^{-1}/10 = 10^{-5}$, and by using Chambolle's algorithm \citep{Chambolle} to evaluate the proximal operator of the TV-norm). Fig. \ref{FigMRI3}(a) shows the threshold values $\eta_\alpha$ for a range of values of $\alpha \in [0.01,0.99]$. Observe that $U(x_\dagger) = 1.27 \times 10^4$ is significantly lower than the values displayed in Fig. \ref{FigMRI3}(a), indicating that the counter example image $x_\dagger$ belongs to set of likely solutions to the inverse problem (e.g., at level $90\%$ $\eta_{0.10} = 2.34 \times 10^4$ hence $x_\dagger \in \hpd_{0.10}$). Based on this we conclude that, with the current number of observations and noise level, it is not possible to assert confidently that the structure considered is present in the true image. Consequently, we would recommend that this data is not used as primary evidence to support decision-making about this structure. Generating the Monte Carlo samples and computing the HPD threshold values required $15$ minutes.  

\paragraph{Comparison with proximal MALA}
We conclude this experiment by benchmarking our estimations against Px-MALA, which targets \eqref{tomographic} exactly (to obtain accurate estimates for Px-MALA we use $n = 10^7$ iterations with an adaptive time-step targeting an average acceptance rate of order $45\%$). The HPD threshold values $\eta_\alpha$ obtained with Px-MALA are reported in Fig. \ref{FigMRI3}(a), notice the approximation error of order of $3\%$ due to MYULA's estimation bias (this does not affect the conclusions of the experiment). With regards to computational performance, an efficiency analysis of the two algorithms indicates that for this model MYULA is approximately two orders of magnitude more efficient than Px-MALA in terms of integrated autocorrelation time (for illustration Fig. \ref{FigMRI3}(b) compares the autocorrelation functions for slowest component of the MYULA and Px-MALA chains).

\begin{figure}[htbp!]
\begin{minipage}[l2]{0.49\linewidth}
  \centering
  \centerline{\includegraphics[width=7.5cm]{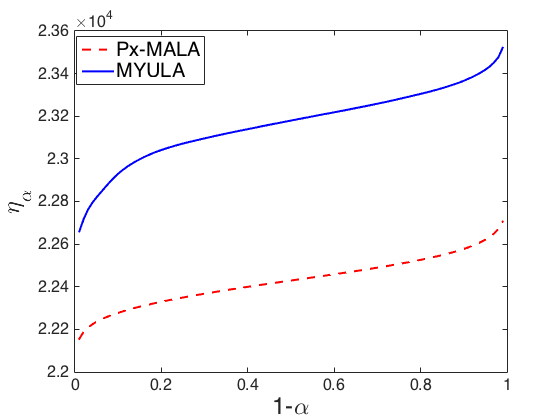}}
  \small{(a)}
\end{minipage}
 \begin{minipage}[l2]{0.49\linewidth}
   \centering
   \centerline{\includegraphics[width=7.5cm]{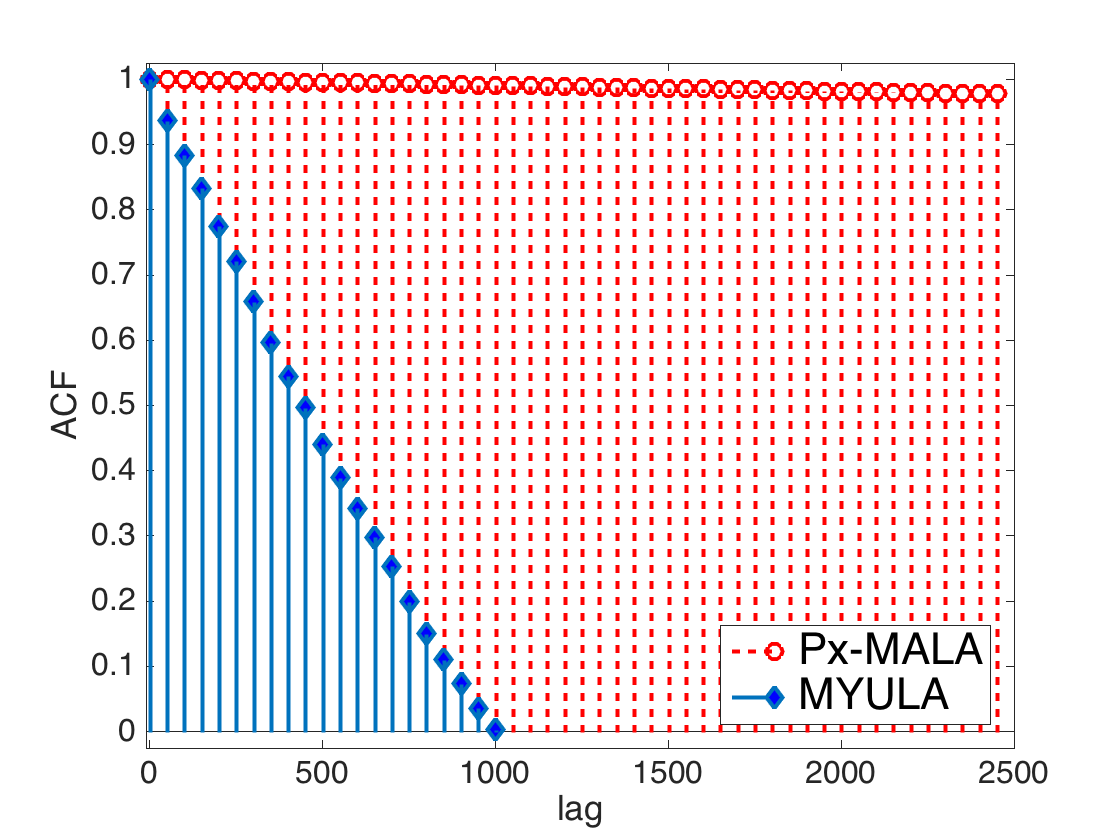}}
   \small{(b)}
 \end{minipage}

\caption{\small{Tomography experiment: (a) HDP region thresholds $\eta_\alpha$ for MYULA and Px-MALA, (b) chain autocorrelation functions for MYULA and Px-MALA.}} \label{FigMRI3}
\end{figure}

\subsubsection{Experiment 4: Sparse image deconvolution with an $\ell_1$ prior}\label{ssec:exp4}
\paragraph{Experiment setup}
The fourth experiment we consider is a sparse image deconvolution problem with a Laplace or $\ell_1$ prior. Again, we aim to recover $x \in \mathbb{R}^n$ from $y = Hx + \bw$, where $H$ is a circulant blurring matrix and $\bw \sim \mathcal{N}(0,\sigma^2\boldsymbol{I}_n)$. We expect sparse solutions and use a Laplace prior related to the $\ell_1$ norm of $x$. The resulting posterior $p(x|y)$ is
\begin{eqnarray}\label{deconvolution}
\pi(x) \propto \exp{\left[-\|y-Hx\|^2/2\sigma^2 - \beta \|x\|_{1}\right]}.
\end{eqnarray}
with fixed hyper-parameters $\sigma >0$ and $\beta > 0$ set manually by an expert. Similarly to the previous experiments, we notice that this density is log-concave and MAP estimation can be performed efficiently by proximal convex optimisation.

Figure \ref{FigMicro} presents an experiment with a microscopy dataset of \cite{Zhu2012} related to high-resolution live cell imaging. Figure \ref{FigMicro}(a) shows an observation $y$ of field of size $4 \mu m \times 4 \mu m$ containing $100$ molecules. This low-resolution observation has been acquired with an instrument specific point-spread-function of size $16 \times 16$ pixels and a blurred signal-to-noise ratio of $20$dB (see \cite{Zhu2012} for more details). Figure \ref{FigMicro}(b) shows the Bayesian estimate $\hat{x}_{MAP}$ associated with \eqref{deconvolution} with hyper-parameter value $\alpha = 0.01$ (notice that $\hat{x}_{MAP}$ is displayed in logarithmic scale to improve visibility). Computing this estimate with SALSA \cite{Figueiredo2011} required $2.3$ seconds. 

\paragraph{Bayesian uncertainty analysis}
As second example of Bayesian uncertainty quantification, we use $\hpd_{\alpha}$ to examine the uncertainty about the position of the group of molecules highlighted in red in Fig. \ref{FigMicro}, which we assume to be relevant for an application considered. Precisely, we used $n = 10^5$ samples generated with MYULA to compute $\hpd_{\alpha}$ with $\alpha = 0.01$ related to the $99\%$ confidence level, and obtained the threshold value $\eta_{0.01} = 9.69 \times 10^4$. Following on from this, to explore $\hpd_{0.01}$ to quantify the uncertainty about the exact position of the molecules, we generated several surrogate test images by modifying $\hat{x}_{MAP}$ by displacing the molecules in different directions until these surrogates exit $\hpd_{0.01}$. Figure \ref{FigMicro}(c) shows the posterior uncertainty of the molecule positions (note that for visibility the figure focuses on the region of interest). This analysis reveals that the uncertainty at level $99\%$ is of the order of $\pm 5$ pixels vertically and $\pm 8$ pixels horizontally, corresponding to $\pm 78nm$ and $\pm 125nm$. It is worth mentioning that these results are in close in agreement with the experimental precision results reported in \cite{Zhu2012}, which identified an average precision of the order of $80nm$ for the one hundred molecules.

\begin{figure}[htbp!]
\begin{minipage}[l2]{0.49\linewidth}
  \centering
  \centerline{\includegraphics[width=7.5cm]{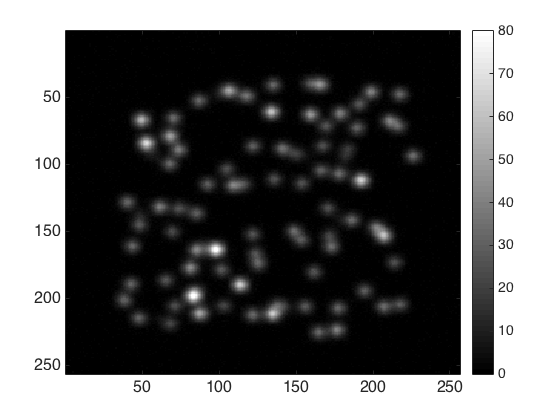}}
  \small{(a)}
\end{minipage}
\begin{minipage}[l2]{0.49\linewidth}
  \centering
  \centerline{\includegraphics[width=7.5cm]{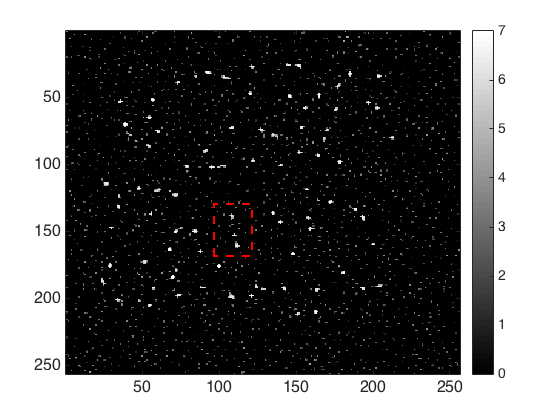}}
  \small{(b)}
\end{minipage}
\begin{minipage}[l2]{0.49\linewidth}
  \centering
  \centerline{\includegraphics[width=7.5cm]{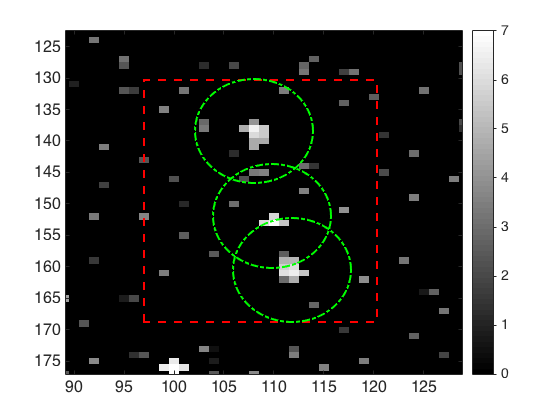}}
  \small{(c)}
\end{minipage}
\begin{minipage}[l2]{0.49\linewidth}
  \centering
  \centerline{\includegraphics[width=7.5cm]{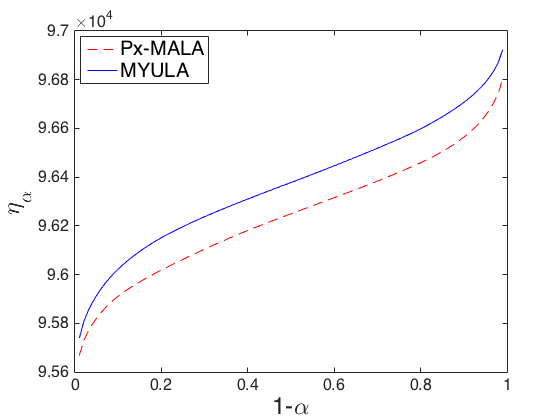}}
  \small{(d)}
\end{minipage}
\caption{\small{Microscopy experiment: (a) Blurred image $y$ ($256\times256$ pixels, $4\mu m \times 4\mu m$)),\newline (b) MAP estimate $\hat{x}_{MAP}$ (logarithmic scale), (c) molecule position uncertainty quantification (vertical: $\pm 78nm$, horizontal $\pm 125nm$), (d) HDP region thresholds $\eta_\alpha$ for MYULA and Px-MALA.}} \label{FigMicro}
\end{figure}

\paragraph{Comparison with proximal MALA}
Again, we conclude the experiment by benchmarking our estimations against Px-MALA, which targets \eqref{deconvolution} exactly (to obtain accurate estimates for Px-MALA we use $n = 2 \times 10^{7}$ iterations with an adaptive step-size targeting an acceptance rate of the order of $45\%$). Figure \ref{FigMicro}(d) compares the estimations of the threshold values $\eta_\alpha$ obtained with MYULA and Px-MALA for different values of $\alpha$, indicating that the approximation errors of MYULA are of the order of $0.1\%$. Moreover, performance analyses based on the chains generated with each algorithm indicate that in this case MYULA is approximately one order of magnitude more computationally efficient than Px-MALA.




\section{Conclusion}\label{sec:conclusion}
This paper presented a new and general proximal MCMC methodology to perform Bayesian computation in log-concave models, with a focus on enabling advanced Bayesian analyses for imaging inverse problems that are convex and not smooth, and currently solved mainly by convex optimisation. The methodology is based on a Moreau-Yoshida-type regularised approximation of the target density that is by construction is log-concave and Lipchitz continuously differentiable, and which can be addressed efficiently by using an unadjusted Langevin MCMC algorithm. We provided a detailed theoretical analysis of this scheme, including asymptotic as well as non-asymptotic convergence results, and bounds on the convergence rate of the chains with explicit dependence on model dimension. In addition to being highly computational efficient and having a strong theoretical underpinning, this new methodology is general and can be applied straightforwardly to most problems solved by proximal optimisation, particularly all problems solved by using forward-backward splitting techniques. The proposed methodology was finally demonstrated with four experiments related to image deconvolution and tomographic reconstruction with total-variation and l1 sparse priors, where we conducted a range of challenging Bayesian analyses related to model comparison and uncertainty quantification, and where we reported estimation accuracy and computational efficiency comparisons with the proximal Metropolis-adjusted Langevin algorithm.

\section{Acknowledgements}
Marcelo Pereyra holds a Marie Curie Intra-European Research Fellowship for Career Development at the School of Mathematics of the University of Bristol,
and is a Visiting Scholar at the School of Mathematical and Computer Sciences of Heriot-Watt University.

\bibliographystyle{plain}
\bibliography{Bibliography/biblio,refs}

\appendix

\section{Proof of \Cref{propo:finite-measure-MY}}
\label{sec:proof-crefpr-meas}
We preface the proof by a Lemma.
\begin{lemma}
  \label{lem:control-fun-convex-gene}
  Let $\gconv: \rset^d \to \ocint{-\infty,\plusinfty}$ be a lower bounded, \lsc~convex function satisfying $    0 < \int_{\rset^d} \rme^{-\gconv(y)}\rmd y < \plusinfty $.
Then there exists $\xgconv \in \rset^d$, $\Rgconv, \rhogconv >0$ such that for all $x \in \rset^d$, $x \not \in \ball{\xgconv}{\Rgconv}$, $\gconv(x)-\gconv(\xgconv) \geq \rhogconv\norm{x-\xgconv}$.
\end{lemma}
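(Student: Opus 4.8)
The plan is to show that the integrability condition forces $\gconv$ to be coercive, and that convexity then upgrades coercivity to the desired linear lower bound through a single extrapolation estimate. I would organise the argument as: (i) preliminaries on the domain and on sublevel-set volumes; (ii) coercivity via a geometric lemma; (iii) the extrapolation producing $\Rgconv$ and $\rhogconv$.

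First I would record two preliminary facts. Since $\int_{\rset^d} \rme^{-\gconv}\,\rmd y > 0$, the convex set $\{\gconv < \plusinfty\}$ has positive Lebesgue measure, hence its affine hull is all of $\rset^d$ and it has nonempty interior; on this interior $\gconv$ is continuous, being convex and finite. I fix an interior point and call $a$ its value. Since $\int_{\rset^d}\rme^{-\gconv} < \plusinfty$ and $\gconv$ is bounded below, every sublevel set satisfies $\vol\{\gconv \le M\} \le \rme^{M}\int_{\rset^d}\rme^{-\gconv} < \plusinfty$, because $\gconv$ being \lsc~makes $\{\gconv \le M\}$ closed and $\rme^{-\gconv} \ge \rme^{-M}$ there.

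The crux is the geometric claim that a closed convex set with nonempty interior and finite volume must be bounded. I would argue by contraposition: an unbounded closed convex set $C$ with $\ball{p}{r}\subseteq C$ contains a ray $\{q+tv : t\ge 0\}$ with $\|v\|=1$; taking midpoints of $q+tv$ with $\ball{p}{r}$ shows $\ball{(q+p)/2 + (t/2)v}{r/2}\subseteq C$ for every $t\ge 0$, and the union of these balls contains a half-infinite cylinder of radius $r/2$, so $\vol(C)=\plusinfty$. Applying this to $\{\gconv\le M\}$ for $M>a$ (which has nonempty interior since it contains a neighbourhood of the fixed interior point) shows each such sublevel set is bounded; hence $\gconv$ is coercive, and being \lsc~it attains its minimum at some $\xgconv$ with $\gconv(\xgconv)=m$.

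Finally I would produce $\Rgconv$ and $\rhogconv$. The minimizer set $\{\gconv = m\} = \{\gconv \le m\}$ sits inside the bounded set $\{\gconv \le a+1\}$, so it is bounded and contained in the open ball $\boule{\xgconv}{\Rgconv}$ for $\Rgconv$ large enough. On the compact sphere $S=\{\|x-\xgconv\|=\Rgconv\}$ the \lsc~function $\gconv$ then attains a minimum $m+\delta$ with $\delta>0$, since no minimizer lies on $S$. For any $x$ with $\|x-\xgconv\|\ge\Rgconv$, the point $y=\xgconv+\Rgconv(x-\xgconv)/\|x-\xgconv\|$ lies on $S$ and equals $(1-\lambda)\xgconv+\lambda x$ with $\lambda=\Rgconv/\|x-\xgconv\|\in(0,1]$; convexity gives $m+\delta\le\gconv(y)\le(1-\lambda)m+\lambda\gconv(x)$, whence $\gconv(x)-\gconv(\xgconv)\ge(\delta/\Rgconv)\|x-\xgconv\|$, so $\rhogconv=\delta/\Rgconv$ works. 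The main obstacle is the geometric lemma of the third step (finite volume forcing boundedness for convex sets, which underlies coercivity); the rest is routine convex analysis, and I expect the extrapolation to be the cleanest part.
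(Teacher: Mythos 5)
Your proof is correct, and it shares the paper's overall architecture --- the Markov-type bound $\vol\parenthese{\{\gconv \leq M\}} \leq \rme^{M}\int_{\rset^d}\rme^{-\gconv(y)}\rmd y$, the deduction that sublevel sets are bounded because an unbounded convex set would force infinite volume, and a radial convexity extrapolation along rays --- but the two middle steps are implemented genuinely differently. Where you obtain a nonempty interior of $\{\gconv < \plusinfty\}$ from positive Lebesgue measure, use continuity of the convex function there, and prove the key geometric lemma by extracting a recession ray from an unbounded closed convex set and sweeping a ball of radius $r/2$ along it to fill a half-cylinder, the paper avoids this topological input entirely: it picks $d+1$ affinely independent points $\rmv_0,\dots,\rmv_d$ at which $\gconv$ is finite, bounds $\gconv$ on the simplex by $M_{\chull}$ via convexity (this is the step where lower boundedness of $\gconv$ is actually used, to make $\sup_{\chull}\abs{\gconv}$ finite), and derives infinite volume from the at-least-linear growth of the cones $\chull(\rmv_0,\dots,\rmv_d,x_R)$ as $\norm{x_R}\to\plusinfty$. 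Your cylinder argument is the more standard convex-geometry route; the paper's simplex argument is more hands-on but self-contained. The extrapolations also differ: the paper anchors at $\rmv_0$, which need not be a minimizer, and uses the threshold $M_{\chull}+1$ attained outside the bounded sublevel set to get the slope $1/\Rgconv$ with no appeal to attainment of the minimum or compactness of the sphere; you instead pass through coercivity, attain the minimum at $\xgconv$, and extrapolate from the gap $\delta$ over the sphere --- which buys the slightly stronger conclusion that $\xgconv$ can be taken to be the global minimizer of $\gconv$. One cosmetic repair is needed in your last step: the minimum of $\gconv$ over the sphere $S$ can equal $\plusinfty$ (take $\gconv$ to be the convex indicator of a ball, as in the paper's uniform-density example, with $\Rgconv$ large), so $\delta$ as you define it need not be finite; replacing it by $\min(\delta,1)$ preserves the inequality $\gconv \geq m + \min(\delta,1)$ on $S$, and the same convexity estimate then yields $\rhogconv = \min(\delta,1)/\Rgconv$.
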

\begin{proof}
The proof is a simple extension of the one of \cite[Theorem 2.2.2]{bakry:barthe:cattiaux:guillin:2008}, where $\gconv$ is assumed to be continuously differentiable.

We first show that $\gconv$ is finite on a non-empty open set of
$\rset^d$.  Note since $\int_{\rset^d}
\rme^{-\gconv(y)}\rmd y> 0$, the set $\{ \gconv < \infty \}$ can not be contained in a
$k$-dimensional hyperplane, for $k \in \{0,\cdots,d-1 \}$. Then,
there exists $d+1$ points $\{\rmv_i\}_{0 \leq i \leq d} \subset \{ \gconv < \infty
\}$ such that the vectors $\{\rmv_i-\rmv_0\}_{1 \leq i \leq d}$ are linearly
independent. Denote by $\chull(\rmv_0,\cdots,\rmv_d)$ the convex hull
of $\{\rmv_i\}_{0\leq i \leq d} $ defined by
$$
\chull(\rmv_0,\cdots,\rmv_d)= \left\{ \sum_{i=0}^d \alpha_i \rmv_i \ | \
\sum_{i=0}^d \alpha_i =1 \ , \forall i \in \{0,\cdots,d \} \ , \
\alpha_i \geq 0 \right\} \eqsp.
$$  
Since $\gconv$ is convex and $\chull(\rmv_0,\cdots,\rmv_d) \subset \{ \gconv < \infty \}$, we have
\begin{equation}
    \label{eq:max_conv_hull}
    \sup_{y \in \chull(\rmv_0,\cdots,\rmv_d)} \abs{\gconv(y)}  \leq M_{\chull} = \max_{i \in \{0,\cdots, d\}} \{ \abs{\gconv(\rmv_i)} \} \eqsp.
\end{equation}
It follows from $\{\rmv_i\}_{0 \leq i \leq d} \subset \{ \gconv < \infty \}$ and $\gconv$ is lower bounded that $M_{\chull}$ is
finite. Finally by \cite[Lemma 1.2.1]{florenzano:levan:2001}, $\chull(\rmv_0,\cdots,\rmv_d)$ has non empty interior. 

Consider now the set $\{ \gconv \leq M_{\chull} +1 \}$.  We
prove by contradiction that it is a bounded subset of $\rset^d$. Assume that for all $R
\geq 0$, there exists $x_R \in \{ \gconv \leq M_{\chull} +1 \}$ and
$x_R \not \in \ball{\rmv_0}{R}$. Then since $\{ \gconv \leq M_{\chull} +1 \}$ is
convex, it contains the convex hull of
$\{\rmv_0,\cdots,\rmv_d,x_{R}\}$. Since $\chull(\rmv_0,\cdots,\rmv_d)$
has non empty interior, the volume of
$\chull(\rmv_0,\cdots,\rmv_d,x_{R})$ grows at least linearly in $R$
and the volume corresponding to $\{ \gconv \leq M_{\chull} +1 \}$ is
infinite taking the limit as $R$ goes to $ \infty$. On the other hand,
by assumption and since $\{\rmv_0,\cdots,\rmv_d,x_R\} \subset \{ \gconv \leq
M_{\chull} +1 \}$, we have using the Markov inequality
\begin{equation*}
  \vol\parenthese{  \{ \gconv \leq M_{\chull} +1 \} } \leq  \rme^{M_{\chull}+1} \int_{ \{ \gconv \leq M_{\chull} +1 \}} \rme^{-\gconv(y)} \rmd y < \plusinfty \eqsp,
\end{equation*}
which  leads to a contradiction. Then there exists $\Rgconv \geq 0$, such that $\{\gconv \leq M_{\chull}+1\} \subset \ball{\rmv_0}{\Rgconv}$. \\
For all $x \not \in \ball{\rmv_0}{\Rgconv}$, consider $y =\Rgconv
(x-\rmv_0)\norm[-1]{x-\rmv_0} + \rmv_0$. Note that $y \not \in \{ \gconv
\leq M_{\chull} + 1 \}$, so $\gconv(y) \geq M_{\chull}+1$.  Now using
the convexity of $\gconv$, we have for all $x \not \in
\ball{\rmv_0}{\Rgconv}$,
\begin{equation*}
M_{\chull}+1 \leq \gconv(y) \leq  \Rgconv\norm[-1]{x-\rmv_0} (\gconv(x)-\gconv(\rmv_0)) +\gconv(\rmv_0) \eqsp.
\end{equation*}
Since $\gconv(\rmv_0) \leq M_{\chull}$, we get
\begin{equation*}
  (\gconv(x)-\gconv(\rmv_0)) \geq \Rgconv^{-1}\norm{x-\rmv_0}
\end{equation*}
 and the proof is concluded setting $\xgconv = \rmv_0$.
\end{proof}

\begin{proof}[Proof of \Cref{propo:finite-measure-MY}]
\begin{enumerate}[label=\alph*), wide=0pt, labelindent=\parindent]
\item 
We first assume that \Cref{assum:integrabilite}-\ref{assum:integrable_g} holds.
  By \eqref{eq:id-MY-env}, $U \geq \Uml$ and therefore $0 <
  \int_{\rset^d} \rme^{-U(y)} \rmd y < \int_{\rset^d} \rme^{-\Uml(y)} \rmd y$. We now prove $\rme^{-\gUl}$
  is integrable with respect to the Lebesgue measure, which  implies
  $y \mapsto \rme^{-\Uml(y)}$ is integrable as well since $\fU$ is assumed to be lower bounded.  By \Cref{assum:form-potential} and
  \Cref{lem:control-fun-convex-gene}, there exist $ \rhoUconv >0$, $\xUconv \in \rset^d$ and $M_1 \in \rset$ such that for all $x \in \rset^d$, $\gU(x)-\gU(\xUconv) \geq  M_1 +
  \rhoUconv\norm{x-\xUconv} $. Thus, for all $x \in \rset^d$, we have by \eqref{eq:id-MY-env}
 \begin{align}
\nonumber
\gUl(x) -\gU(x_{\gU})& \geq M_1 + \rhoUconv\norm{\proxgul(x)-\xUconv} +(2 \lambdaMY)^{-1}\norm[2]{x-\proxgul(x)} \\
\label{eq:second bound-finite-measure-MY}
&\geq M_1 + \inf_{y \in \rset^d} \{ \rhoUconv\norm{y-\xUconv} +(2 \lambdaMY)^{-1}\norm[2]{x-y} \}
\geq M_1 +\gconvD^{\lambdaMY}(x) \eqsp,
\end{align}
where $\gconvD^{\lambdaMY}(x)$ is the $\lambdaMY$-Moreau Yosida
envelope of $\gconvD(x) = \rhoUconv \norm{x-\xUconv}$. By
\cite[Section 6.5.1]{Parikh2013}, the proximal operator associated
with the norm is the block soft thresholding given for all $\lambdaMY
>0$ and $x \in \rset^d \setminus \{ 0 \}$ by $\proxglD(x) =
\max(0,1-\lambdaMY/\norm{x}) x $ and $\proxglD(0) =0$. Therefore using
again \eqref{eq:id-MY-env}, it follows that there exists $M_2 \in \rset$ such
that for all $x \in \rset^d$,
\begin{equation*}
\gconvD^{\lambdaMY}(x) \geq  \rhoUconv \norm{x-\xUconv} + M_2 \eqsp.
\end{equation*}
Combining this inequality with \eqref{eq:second bound-finite-measure-MY} concludes the proof.

We now assume that \Cref{assum:integrabilite}-\ref{assum:lipschitz_g}
holds. First, we show that for all $\lambdaMY >0$
\begin{equation}
\label{eq:unif_prox}
\sup_{x \in \rset^d} \{\gU(x) - \gUl(x) \}
\leq \lambdaMY \norm[2][\Lip]{\gU}/2 \eqsp,
\end{equation}
 which will conclude the
proof since $\int_{\rset^d} \rme^{-U(x)} \rmd x < \plusinfty$. Using that $\gU$ is Lipschitz, we have by \eqref{eq:id-MY-env},
for all $x \in \rset^d$
\begin{align*}
  \gU(x) - \gUl(x)&= \gU(x) - \inf_{y \in \rset^d} \defEns{\gU(y) + (2 \lambdaMY)^{-1}\norm[2]{x-y}} = \sup_{y \in \rset^d} \defEns{ \gU(x) - \gU(y) - (2 \lambdaMY)^{-1}\norm[2]{x-y}} \\
&\leq \sup_{y \in \rset^d} \defEns{ \norm[][\Lip]{\gU} \norm{x-y} - (2 \lambdaMY)^{-1}\norm[2]{x-y}} \leq  \lambdaMY \norm[2][\Lip]{\gU}/2 \eqsp,
 \end{align*}
where we have used that the maximum of $u \mapsto au - b u^2$, for $a,b \geq 0$, is given by $a^2/(4b)$.
\item This point is a straightforward consequence of \eqref{eq:definition-grad-prox} and \eqref{eq:lip_moreau_yosida}.
\item   
Since $\pi$  has also a density with
  respect to the Lebesgue measure and $ \Uml(x) \leq U(x)$ for all $x \in \rset^d$, we have for all $\lambda >0$
\begin{equation}
\label{eq:bound_2_TV_MY}
      \tvnorm{\piml-\pi}  =
\int_{\rset^d} \abs{\piml(x)-\pi(x)} \rmd x
\leq 2 A_{\lambdaMY} \eqsp,
\end{equation}
where $A_{\lambdaMY} = \int_{\rset^d}  \{1-\rme^{\gUl(x)-\gU(x)} \} \piml(x)\rmd x  = 1- \defEns{\int_{\rset^d}\rme^{-\Uml(x)} \rmd x}^{-1}\int_{\rset^d} \rme^{-U(x)}  \rmd x$.
By \eqref{eq:limit-d-lambda}, for all $x \in \rset^d$, we get $\lim_{\lambdaMY \downarrow  0} \uparrow  \Uml(x)
= U(x)$. We conclude by applying the monotone convergence theorem.
\item
Using that for all $x \in \rset^d$, $\gUl(x) \leq \gU(x)$ and  $1-\rme^{-u} \leq u$ for all $u \geq 0$, \eqref{eq:bound_2_TV_MY} shows that
\begin{equation*}
\tvnorm{\piml-\pi} \leq 2\int_{\rset^d}\{ \gU(x)-\gUl(x) \} \piml(x) \rmd x \eqsp.
 \end{equation*}
Then the proof follows from \eqref{eq:unif_prox}. 

\end{enumerate}
\end{proof}

\section{Model selection using improper priors}
\label{sec:selection_model_case-proper-imp_prior} 
Model selection using improper priors can lead to tedious
considerations \cite{cprbayes}. Indeed, in that case the joint density
of each model is not defined. However, this difficulty can be avoided
when the considered models share the same improper prior distribution
see \cite{marin:robert:2007:bayesian}. Let $\mathcal{M}_1, \ldots,
\mathcal{M}_K$ be $K$ alternative Bayesian models having the same
improper distribution with density $\tilde{p}(x)$ on $\rset^d$ and
associated to the family of likelihood functions $p_i(y | x)$ such
that for all $i \in \{1,\ldots,K\}$, $\int_{\rset^d} p_i(y|x)
\tilde{p}(x) \rmd x < \plusinfty$. The marginal posterior
probabilities of $\mathcal{M}_1,\ldots, \mathcal{M}_K$ are then
defined by
\begin{equation}\label{margPost}
\tilde{p}(\mathcal{M}_j | y) = \frac{ \tilde{p}(y|\mathcal{M}_j) K^{-1}}{\sum_{k = 1}^K \tilde{p}(y | \mathcal{M}_k) K^{-1}},\quad j \in \{1, \ldots, K\}\, ,
\end{equation}
where for all $j \in \{1,\ldots,K\}$,
\begin{equation*}
\tilde{p}  (y|\mathcal{M}_j)  = \int_{\rset^d} p_i(y|x)
\tilde{p}(x) \rmd x \eqsp.
\end{equation*}

\section{Truncated harmonic mean estimator}\label{HME}

\subsection{Case of proper prior distributions}
\label{sec:case-proper-prior}
Consider a positive probability density $p$ on $\rset^d \times
\rset^m$ for $d,m \in \nset^*$ of the form: $p(x,y) = f(x,y) /
\int_{\rset^d \times \rset^m} f(z,w) \rmd z \rmd w$. Assume that $f$
is known but not the normalization constant of $p$.  Here $p$ plays
the role of a joint distribution of the data and the parameters. It
can be defined if we take a proper prior distribution for the
parameters.
Define for any bounded Borel set $\borelean \in \borelSet(\rset^d)$
\begin{align}
\label{harmonicmean}
  I_{\borelean}(f,y) &= \int_{\rset^d} \1_{\borelean}(x) \frac{p(x|y)}{f(x,y)} \rmd x \\
\nonumber
& = \left. \int_{\rset^d} \1_{\borelean}(x) \frac{p(x|y)}{ p(x,y)} \rmd x  \middle/   \int_{\rset^d \times \rset^{m}} f(z,w) \rmd z \rmd w \right.\eqsp.
\end{align}
Since $p(x|y) = p(x,y)/p(y)$, the following identity holds
\begin{equation}
\label{eq:relation_harmonic_mean}
 p(y) = \left.   \vol(\borelean) \defEns{I_{\borelean}(f,y) \int_{\rset^d \times \rset^{m}} f(z,w) \rmd z \rmd w }^{-1}  \right. \eqsp.
\end{equation}
For all $y \in \rset^m$ and $\borelean\in \mathcal{B}(\rset^d)$, we define the truncated harmonic mean estimator of $I_{\borelean}(f,y)$ by 
\begin{equation}\label{harmonicmean_est_1}
\hat{I}_{\borelean}(f,y)  = \sum_{k = 1}^n \frac{\indi{\borelean}(X_k)}{f(X_k,y )} \eqsp,
\end{equation}
where $(X_k)_{k \geq 1}$ is an ergodic Markov chain targeting
$p(x|y)$ to ensure that the defined estimator almost surely converges to
$I_{\borelean}(f,y)$ given by \eqref{harmonicmean}.

Let $p_1 , p_2$ be two positive distributions on $\rset^d \times
\rset^m$, associated with their two unormalized versions $f_1,f_2:
\rset^d \times \rset^m \to \rset_+$.  We aim to estimate
$p_1(y)/p_2(y)$.  By \eqref{eq:relation_harmonic_mean}, we have
\begin{equation*}
\label{eq:ratio_marginal_likelihood}
  \frac{p_1(y)}{p_2(y)} = 
\frac{\int_{\rset^d \times \rset^{m}} f_2(z,w) \rmd z \rmd w}{\int_{\rset^d \times \rset^{m}} f_1(z,w) \rmd z \rmd w} \frac{I_{\borelean}(f_2,y) } {I_{\borelean}(f_1,y) }
\end{equation*}
Using  \eqref{harmonicmean_est_1}, we estimate this ratio  by
\begin{equation*}
\frac{p_1(y)}{p_2(y)} \approx 
\ratioEst_{1,2}(y) = \frac{\int_{\rset^d \times \rset^{m}} f_2(z,w) \rmd z \rmd w}{\int_{\rset^d \times \rset^{m}} f_1(z,w) \rmd z \rmd w} \frac{\hat{I}_{\borelean}(f_2,y) } {\hat{I}_{\borelean}(f_1,y) }\eqsp.  
\end{equation*}
However, we need to compute the ratio $\int_{\rset^d \times \rset^{m}} f_2(z,w) \rmd z \rmd w /\int_{\rset^d \times \rset^{m}} f_1(z,w) \rmd z \rmd w$, if it not equal to $1$. 

Assume that for $i=1,2$, $f_i(x,y) = h_i(x,y)g_i(x)$, for some measurable functions $h_i: \rset^d \times \rset^m \to \rset^*_+$, $g_i : \rset^d \to \rset^*_+$ such that  $\int_{\rset^m} h_i(x,y) \rmd y $ does not depend on $x$. Note that this assumption holds in \Cref{sec:experiment-2:-image}.
We distinguish two cases:
\begin{enumerate}
\item
If for $i=1,2$,  $g_i$ is integrable, we get
\begin{equation*}
  \ratioEst_{1,2}(y) = \frac{\int_{\rset^d} g_2(z) \rmd z }{\int_{\rset^d} g_1(z) \rmd z} \frac{\hat{I}_{\borelean}(f_2) } {\hat{I}_{\borelean}(f_1) } \eqsp.
\end{equation*}
In the case where the ratio $\int_{\rset^d} g_2(z) \rmd z /
\int_{\rset^d} g_1(z) \rmd z$ is unknown, such as with the priors considered in the experiment reported in Section \ref{sec:experiment-2:-image}, we use a Monte
Carlo algorithm such as MYULA or Px-MALA to compute it. Observe that this computation can be performed offline when the ratio does not depend on the value of $y$.

\item If there exists a function $g: \rset^d \to \rset_+^*$ and two real
  numbers $\homogeneousFactorD_1,\homogeneousFactorD_2 >0$ such that
  for $i=1,2$, $g_i( x) = g(\homogeneousFactorD_i x)$ for all $x \in
  \rset^d$, we get for all $R > 0$
  \begin{multline*}
    \int_{\rset^d \times \rset^{m}} \1_{\ball{0}{R}} f_2(z,w) \rmd z \rmd w /\int_{\rset^d \times \rset^{m}} \1_{\ball{0}{\homogeneousFactorD_1 \homogeneousFactorD_2^{-1}R}} f_1(z,w) \rmd z \rmd w \\
=
\int_{\rset^d } \1_{\ball{0}{R}} g_2(z) \rmd z  /\int_{\rset^d} \1_{\ball{0}{\homogeneousFactorD_1 \homogeneousFactorD_2^{-1}R}} g_1(z) \rmd z  
 = (\homogeneousFactorD_1/ \homogeneousFactorD_2)^d 
\eqsp.
  \end{multline*}
Since for all $a >0$ and $i=1,2$,
  \begin{equation*}
 \int_{\rset^d \times \rset^{m}} f_i(z,w) \rmd z \rmd w  = \lim_{R \to \plusinfty} \int_{\rset^d \times \rset^{m}} \1_{\ball{0}{aR}} f_i(z,w) \rmd z \rmd w 
 \eqsp,
  \end{equation*}
we get 
\begin{equation*}
\left. \int_{\rset^d \times \rset^{m}} f_2(z,w) \rmd z \rmd w \middle/\int_{\rset^d \times \rset^{m}} f_1(z,w) \rmd z \rmd w \right.  =   (\homogeneousFactorD_1/ \homogeneousFactorD_2)^d  \eqsp.
\end{equation*}

\end{enumerate}

\subsection{Case of improper prior distributions}
\label{sec:case-proper-imp_prior}
Let $f : \rset^d \times \rset^m \to \rset_+$ such that for all $y \in
\rset^m$, 
\begin{equation}
  \label{eq:def_marginal_improper}
\tilde{p}(y) = \int_{\rset^d} f(x,y) \rmd x < \plusinfty \eqsp.  
\end{equation}
Here, $f$ plays the role of an improper joint density of the data and
the parameters as the prior distribution is improper. This setting corresponds to \Cref{ssec:exp1}. Define for
all $y \in \rset^m$ the conditional distribution on $\rset^d \times
\rset^m$ by $p(x|y) = f(x,y)/\tilde{p}(y)$, where $\tilde{p}$ is
defined by \eqref{eq:def_marginal_improper}. Then, define 
for any bounded Borel set $\borelean \in \borelSet(\rset^d)$
\begin{equation}
\label{harmonicmean_improper}
  I_{\borelean}(f,y) = \int_{\rset^d} \1_{\borelean}(x) \frac{p(x|y)}{f(x,y)} \rmd x \eqsp.
\end{equation}
Then by \eqref{eq:def_marginal_improper}, we get 
\begin{equation}
\label{eq:relation_harmonic_mean_improper}
 \tilde{p}(y) = \left.   \vol(\borelean) /I_{\borelean}(f,y)  \right. \eqsp.
\end{equation}
For all $y \in \rset^m$ and $\borelean\in \mathcal{B}(\rset^d)$, we define the truncated harmonic mean estimator of $I_{\borelean}(f,y)$ as in \Cref{sec:case-proper-prior} by 
\eqref{harmonicmean_est_1}.

Let now $f_1,f_2: \rset^d \times \rset^m \to \rset_+$, satisfying for
all $i=1,2$ and $y \in \rset^m$, $\tilde{p}_i(y) = \int_{\rset^d}f_i(x,y) \rmd
x < \plusinfty$.  We aim to estimate $\tilde{p}_1(y)/\tilde{p}_2(y)$. But by \eqref{eq:relation_harmonic_mean_improper}, we have 
\begin{equation*}
\label{eq:ratio_marginal_likelihood}
\frac{\tilde{p}_1(y)}{\tilde{p}_2(y)} = 
 \frac{I_{\borelean}(f_2,y) } {I_{\borelean}(f_1,y) } \eqsp.
\end{equation*}
Using \eqref{harmonicmean_improper} and \eqref{harmonicmean_est_1}, we estimate this ratio  by
\begin{equation*}
\frac{\tilde{p}_1(y)}{\tilde{p}_2(y)} \approx 
\ratioEst_{1,2}(y) = \frac{\hat{I}_{\borelean}(f_2,y) } {\hat{I}_{\borelean}(f_1,y) }\eqsp.  
\end{equation*}



\end{document}